\newcommand\numberthis{\addtocounter{equation}{1}\tag{\theequation}}
\newtheorem*{result}{Theorem 1}
\def\BibTeX{{\rm B\kern-.05em{\sc i\kern-.025em b}\kern-.08em
		T\kern-.1667em\lower.7ex\hbox{E}\kern-.125emX}}
\begin{document}
	
\bstctlcite{IEEEexample:BSTcontrol}

\title{Optimal Phase Design for RIS Channel Estimation}

\author{Chelsea~L.~Miller,
	    Peter~J.~Smith,
	    Pawel~A.~Dmochowski
	\thanks{C.~L.~ Miller was with the School of Engineering and Computer Science, %
	        Victoria University of Wellington, New Zealand. She is currently at Spark New Zealand, Wellington, New Zealand %
	        (e-mail: chelsea.miller@ecs.vuw.ac.nz).}%
	\thanks{P.~J.~Smith is with the School of Mathematics and Statistics, %
	        Victoria University of Wellington, Wellington, New Zealand %
	        (e-mail: peter.smith@vuw.ac.nz).}%
	\thanks{P.~A.~Dmochowski is with the School of Engineering and Computer Science, %
	        Victoria University of Wellington, Wellington, New Zealand. %
	        (e-mail: pawel.dmochowski@vuw.ac.nz.}%
	}
\markboth{Submitted to IEEE Transactions on Wireless Communications}{Miller \MakeLowercase{\textit{et al.}}: Optimal Phase Design for RIS Channel Estimation}
	
	\maketitle
	
	\begin{abstract}
		We develop an optimal version of a prior two-stage channel estimation protocol for RIS-assisted channels. The new design uses a modified DFT matrix (MDFT) for the training phases at the RIS and is shown to minimize the total channel estimation error variance. In conjunction with interpolation (estimating fewer RIS channels), the MDFT approach accelerates channel estimation even when the channel from base station to RIS is line-of-sight. In contrast, prior two-stage techniques required a full-rank channel for efficient estimation. We investigate the  resulting channel estimation errors by comparing different training phase designs for a variety of propagation conditions using a ray-based channel model. To examine the overall performance, we simulate the spectral efficiency with MRC processing for a single-user RIS-assisted system using an existing optimal design for the RIS transmission phases. Results verify the optimality of MDFT while simulations and analysis show that the performance is more dependent on the user-to-RIS channel correlation and the coarseness of the interpolation used, rather than the training phase design. For example, under a scenario with more highly correlated channels, the procedure accelerates channel estimation by a factor of 16, while the improvement is a factor of 5 in  a less correlated case. The overall procedure is extremely robust, with a maximum performance loss of 1.5bits/sec/Hz compared to that with perfect channel state information for the considered channel conditions.
	\end{abstract}
	
	\begin{IEEEkeywords}
	Channel Estimation, Reconfigurable Intelligent Surfaces, Ray-Based Channels.
	\end{IEEEkeywords}

\section{Introduction}\label{intro}

Reconfigurable intelligent surfaces (RIS) have become increasingly popular in the wireless community due to their potential to enhance performance without additional RF chains, reducing the added complexity and power consumption\cite{Mishra2019Channel}. A typical RIS comprises a large number of adjacent reflecting surfaces, each of which is independently controlled by a base station (BS) to reflect incident rays with a desired phase shift. This paper considers an example of a cellular wireless application, wherein the RIS assists communication between a multi-antenna BS and a user (UE). Used in this way, the RIS can be used to boost data rate, reliability or support more users. A common use case is  to overcome the high attenuation and blockage of millimeter wave communications. 
	
To inform the configuration of RIS phases to enhance data transmission, the BS must  acquire channel state information (CSI) for the UE-to-RIS link (UE-RIS), the RIS-to-BS link (RIS-BS), and the direct UE-to-BS link (UE-BS). Once the RIS phases are set, the BS must also use an estimate of the total channel, combining all of the aforementioned channels, to perform signal processing. Many early publications\cite{Nadeem2020Intelligent,Mishra2019Channel} estimate the cascaded channel from UE to BS through the RIS (UE-RIS-BS), and some additionally estimate the direct UE-BS channel \cite{Mishra2019Channel,Nadeem2020Intelligent} while others assume it is blocked\cite{Bjornson2020Intelligent}. For $M$ antennas at the BS and $N$ elements at the RIS, a minimum of $MN$ cascaded channel coefficients must be estimated with an additional $M$ for the direct channel.
	
This estimation can be performed by setting the RIS phases using a series of known phase vectors from a training phase matrix. For example, the columns of an $N\times N$ identity matrix (equivalent to measuring the channel for one active RIS element per training symbol)\cite{Mishra2019Channel,Jensen2020Optimal} or a DFT matrix\cite{Wan2020Broadband,Jensen2020Optimal,Nadeem2020Intelligent,Zheng2020Intelligent}. The authors of \cite{Jensen2020Optimal} find that the DFT training phase matrix significantly reduces the channel estimation (CE) error compared to the identity matrix used in \cite{Mishra2019Channel}, while the authors of \cite{Nadeem2020Intelligent} and \cite{Zheng2020Intelligent} further claim that the DFT matrix provides an optimal training phase design for the cascaded channel. However, little analytical progress has been made in the literature to support these claims. Estimation of the full cascaded channel requires a minimum of $N+1$ training symbols using the methods in \cite{Zheng2020Intelligent,Nadeem2020Intelligent}. As $N$ is typically on the order of $10^2$\cite{Bjornson2020Intelligent}, this consumes a large portion of the transmission frame duration, significantly reducing the number of usable frame symbols for data transmission. Furthermore, the cascaded channel combines the channels for the UE-RIS and RIS-BS link, preventing the use of RIS phase designs requiring separate knowledge of these two links, such as the optimal single-user phase design in \cite{Singh2020Optimal}.
	
The RIS-BS and RIS-UE channels are separately estimated in the CE protocol proposed in \cite{Hu2019TwoTimescale}. The RIS-BS channel is assumed to be quasi-static, and hence the estimate of this link is iteratively refined over a large timescale. Each RIS-BS channel estimate is used to estimate the UE-RIS and UE-BS channels using a training phase matrix comprised of uniform random phases. This method reduces the number of pilots required for the estimation of the UE-RIS and UE-BS channels by a factor of $1/M$. However, this reduction in pilot overhead relies on the RIS-BS channel being full-rank (rank-$M$, as we assume $M<N$).
	
In this paper, we consider the scenario where the RIS is placed in close proximity to the BS, and above the clutter\cite{Nadeem2020Intelligent}. This results in a RIS-BS channel which is dominated by a strong, static line-of-sight (LoS) component, with minimal variable scattering \cite{Liu2019MatrixCalibration,Wan2020Broadband,Nadeem2020asymptotic}. The static LoS component can be estimated offline at the BS\cite{Wan2020Broadband} and the RIS-BS link may be approximated using only this component. This approximation eliminates the need for the additional pilots used to estimate the BS-RIS channel in \cite{Hu2019TwoTimescale}. Unfortunately the LoS component is also rank-1\cite{Wan2020Broadband} and does not support the method in \cite{Hu2019TwoTimescale}. This necessitates a suitable method of reducing pilot overhead for rank-deficient channels.
	
Another method of pilot reduction involves grouping RIS elements into sub-surfaces comprising a number of adjacent elements which are assumed to have similar channels \cite{You2020Intelligent,Zheng2020Intelligent}. The pilot overhead is reduced because the channel for only one ``active'' RIS element per group must be estimated, requiring one pilot symbol per group. The accuracy of the resulting CSI is heavily dependent on the correlation between channel coefficients in each group, which is affected by several aspects of the RIS geometry and channel statistics. Secondly, grouping introduces CSI error which cannot be removed through more accurate measurement of the active elements and must be addressed by other means, as  in \cite{miller2021efficient} where a second estimation stage is employed.
	
The authors of \cite{miller2021efficient} propose a two-stage estimation procedure where the BS-RIS channel comprises a known LoS component and unknown scattering. In Stage 1, the known LoS component is used to perform LMMSE estimation similar to that in \cite{Hu2019TwoTimescale} using DFT training phases, providing an estimate of the UE-BS channel, and the UE-RIS channel coefficients for a subset of active RIS elements. These elements are used to interpolate the complete UE-RIS channel using three interpolation methods, and the RIS phases for data transmission are configured using the estimated CSI. The channel error introduced from grouping and interpolation is addressed with a second estimation stage. Stage 2 uses additional pilot symbols to re-estimate the combined channel, providing a more accurate estimate for UL processing during data transmission. The procedure is found to be very robust, even with significant unknown scattering in the RIS-BS link. However, like much of the literature to date, this study does not provide any analysis of the effects of the training phase matrix on the channel estimation error to facilitate improved training phase design.
	
In light of the gaps in the current literature, we adopt the RIS-assisted channel estimation procedure from \cite{miller2021efficient} and build on this work by providing an analysis of the effects of training phases on the channel estimation error in Stage 1. We use this to design an optimal training phase matrix, and examine the estimation error under various propagation conditions in comparison to alternative designs. We focus on the single-user case, noting that many existing multi-user channel estimation procedures for RIS-assisted transmissions are simply extensions or repetitions of a single-user estimation procedure\cite{Nadeem2020Intelligent,wang2019channel,Zheng2020Intelligent}. To provide an example of the impact of CE on system performance, we configure the RIS phases according to the design in \cite{Singh2020Optimal}, and examine the spectral efficiency (SE) with UL maximal ratio combining (MRC) using the re-estimated total channel from Stage 2 of the estimation procedure. The RIS phases in \cite{Singh2020Optimal} and MRC processing are both optimal for the single-user case and can be expressed in closed form. Hence, this is an ideal scenario to use in evaluating the effects of CE. More specifically, the contributions of this paper are as follows:
\begin{itemize}
    \item We analyse the variance of the error in the UE-BS and UE-RIS channel estimates from Stage 1, identifying the error components arising from unknown scattering in the RIS-BS channel and processed noise at the BS.
	\item We show that the error from the unknown scattering is independent of the training phase vectors and therefore cannot be reduced through training phase design.
	\item We develop an optimal training phase design, labelled MDFT, based on a modification of the DFT matrix.
	\item We examine the normalised mean-squared error (NMSE) of the total channel estimate after Stage 1, the resulting RMS phase error in the RIS transmission phases, and the SE after re-estimation in Stage 2. We compare the performance of several interpolating channel estimation methods with optimal (MDFT), DFT, and random training phase matrices under several conditions. Results show that the robust channel estimation procedure provides remarkable performance unless coarse interpolating methods are used in channels with very low correlation. 
\end{itemize}
%
%
\section{System Model}\label{s:system_model}
We consider a cell with radius $r$ and exclusion radius $r_0$ centered on the origin of the $x$-$y$ azimuth plane. A BS equipped with $M$ omni-directional antennas arranged in a uniform linear array (ULA) with inter-element spacing $d^{\mathrm{\scriptscriptstyle B}}$ wavelengths communicates with a RIS of $N$ reflecting elements arranged in a vertical uniform rectangular array (VURA) with $N_y$ columns of $N_z$ elements with $N_yN_z = N$ and spacing $d^\mathrm{\scriptscriptstyle R}$ wavelengths. The BS and RIS are positioned within the exclusion zone at a distance $D^\mathrm{\scriptscriptstyle BR}$ apart on the $y$-axis, rotated in the azimuth plane such that array broadside is at an angle $\pi/2-\bar{\phi}^{\mathrm{\scriptscriptstyle B}}$ and $-\pi/2-\bar{\phi}^\mathrm{\scriptscriptstyle R}$ relative to the $x$-axis, respectively (see Fig.\ref{fig:RIS_diagram}(a)). We consider the single-user case, noting that the resulting CE procedure may be repeated during the training phase for the multi-user case. The single-antenna UE is placed outside the exclusion zone on the $x$-axis at a distance $D^\mathrm{\scriptscriptstyle U}$ from the origin with $r_0<D^\mathrm{\scriptscriptstyle U}<r$, resulting in a distance $D^\mathrm{\scriptscriptstyle BU}$ from UE to BS and $D^\mathrm{\scriptscriptstyle RU}$ from UE to RIS.
\begin{figure}[ht]
	\centering
	\includegraphics[trim=1cm 0.04cm 1.6cm 0.1cm, clip=true, width=1\columnwidth]{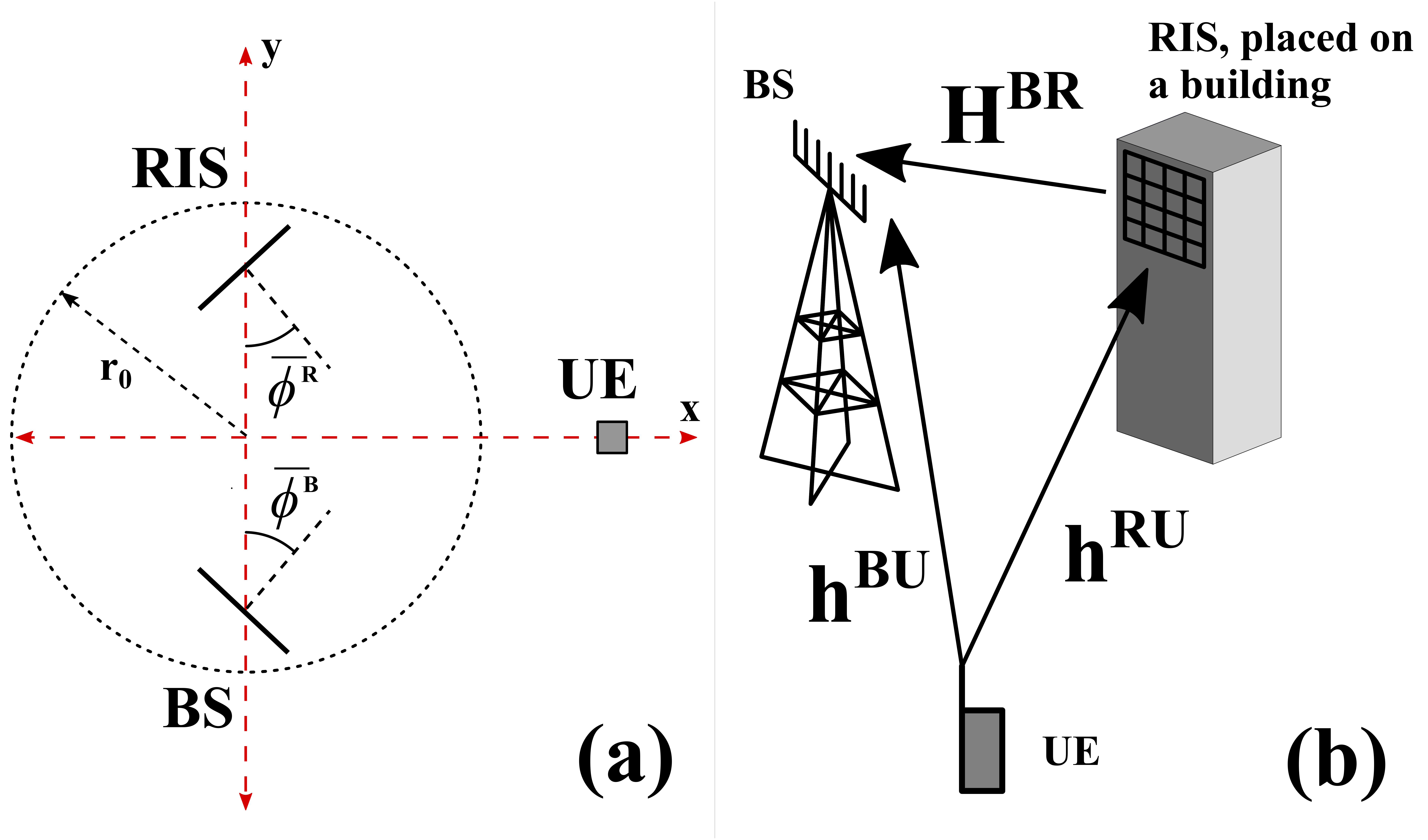}
	\caption{Bird's eye view of exclusion zone (a), and system diagram (b).}
	\label{fig:RIS_diagram}
\end{figure}
We model large-scale fading variables, $\beta$, using the classic pathloss equation
\begin{align}\label{eq:pathloss}
\beta^\mathrm{\scriptscriptstyle BU} = 
A^\mathrm{\scriptscriptstyle BU} L^\mathrm{\scriptscriptstyle BU} \bigg(\frac{D^\mathrm{\scriptscriptstyle BU}}{D_0}\bigg)^{-\Gamma},
\end{align}
and similarly for $\beta^\mathrm{\scriptscriptstyle RU}$ and $\beta^\mathrm{\scriptscriptstyle BR}$.  For all such values, $D$ is the distance from the source to destination, $A$ is the unitless attenuation constant representing the attenuation in the absence of shadowing  at reference distance $D_0$ from the source, $10\log_{10}(L) \sim \mathcal{N}(0,\sigma_\mathrm{sf}^2)$ models lognormal shadow fading,  and $\Gamma$ is the pathloss exponent. 

We model the UE-BS and UE-RIS channels, $\mathbf{h}^\mathrm{\scriptscriptstyle BU} \in \mathbb{C}^\mathrm{M \times 1}$ and $\mathbf{h}^\mathrm{\scriptscriptstyle RU} \in \mathbb{C}^\mathrm{N \times 1}$, using a clustered ray-based model. Each channel is composed of a summation of subrays scattered from several clusters\cite{sangodoyin_cluster_2018} so that
\begin{align}\label{eq:ray-based}
	\mathbf{h}^\mathrm{\scriptscriptstyle BU} = \sum_{c=1}^{C^\mathrm{\scriptscriptstyle BU}}\sum_{s=1}^{S^\mathrm{\scriptscriptstyle BU}}\gamma^\mathrm{\scriptscriptstyle BU}_{c,s}\mathbf{a}^{\mathrm{\scriptscriptstyle B}}(\phi^\mathrm{\scriptscriptstyle BU}_{c,s},\theta^\mathrm{\scriptscriptstyle BU}_{c,s}),
\end{align}
and similarly for $\mathbf{h}^\mathrm{\scriptscriptstyle RU}$, for which the above BS steering vectors $\mathbf{a}^{\mathrm{\scriptscriptstyle B}}(\cdot)$ are replaced with RIS steering vectors $\mathbf{a}^\mathrm{\scriptscriptstyle R}(\cdot)$. For all such channels, $C$ is the number of clusters, $S$ is the number of subrays, $\gamma_{c,s} = \sqrt{\beta_{c,s}}\exp(-j\Theta_{c,s})$ incorporates the ray power $\beta_{c,s}$ and uniform random phase offset $\Theta_{c,s} \sim \mathcal{U}[0,2\pi]$. The ray powers are modelled as $\beta_{c,s} = \beta_c/S$, where the cluster powers, $\beta_c$, are set to decay exponentially from the strongest to weakest scattering cluster with ratio $\eta = \beta_{C}/\beta_1$ such that $\sum_{c=1}^{C}\beta_{c} = \beta$\cite{3GPP}. The steering vectors, $\mathbf{a}^{\mathrm{\scriptscriptstyle B}}(\cdot)$ and $\mathbf{a}^\mathrm{\scriptscriptstyle R}(\cdot)$, are determined by the structures of the BS and RIS arrays, and are given by \cite{antenna_config_paper}
\begin{align*}
	\mathbf{a}^{\mathrm{\scriptscriptstyle B}}(\phi,\theta) &=
	[1, e^{-j2\pi d^{\mathrm{\scriptscriptstyle B}}\sin\theta\sin\phi}...
	    e^{-j2\pi d^{\mathrm{\scriptscriptstyle B}}(M-1)\sin\theta\sin\phi}]^\mathrm{T} \\
	\mathbf{a}_y(\phi,\theta) &=
	[1, e^{-j2\pi d^\mathrm{\scriptscriptstyle R}\sin\theta\sin\phi}...
	    e^{-j2\pi d^\mathrm{\scriptscriptstyle R}(N_y-1)\sin\theta\sin\phi}]^\mathrm{T} \\
	\mathbf{a}_z(\theta) &=
	[1, e^{-j2\pi d^\mathrm{\scriptscriptstyle R}\cos\theta}...
	    e^{-j2\pi d^\mathrm{\scriptscriptstyle R}(N_z-1)\cos\theta}]^\mathrm{T}\\
	\mathbf{a}^\mathrm{\scriptscriptstyle R}(\phi,\theta) &= \mathbf{a}_y(\phi,\theta)\otimes\mathbf{a}_z(\theta).
	    \numberthis\label{eq:steeringvectordefinitions}
\end{align*}
The azimuth angle of each ray relative to array broadside, $\phi_{c,s} = \phi_c + \Delta_{c,s}$, is a summation of the random cluster central angle, $\phi_c$, identical for all subrays in cluster $c$, and a random subray offset $\Delta_{c,s}$. Equivalently, in elevation, $\theta_{c,s} = \theta_c + \delta_{c,s}$ is measured from the zenith. 

The RIS-BS channel is modelled similarly to that in \cite{Liu2019MatrixCalibration} as
\begin{align}
\mathbf{H}^\mathrm{\scriptscriptstyle BR} = \tilde{\mathbf{H}}^\mathrm{\scriptscriptstyle BR} + \bar{\mathbf{H}}^\mathrm{\scriptscriptstyle BR},
\end{align}
where the known specular component is given by 
\begin{align*}
\bar{\mathbf{H}}^\mathrm{\scriptscriptstyle BR} = 
\sqrt{\bar{\beta}^\mathrm{\scriptscriptstyle BR}}
\mathbf{a}^{\mathrm{\scriptscriptstyle B}}
(\bar{\phi}^{\mathrm{\scriptscriptstyle B}},
 \bar{\theta}^{\mathrm{\scriptscriptstyle B}})
 \mathbf{a}^{\mathrm{\scriptscriptstyle R}\dagger}
 (\bar{\phi}^\mathrm{\scriptscriptstyle R},
  \bar{\theta}^\mathrm{\scriptscriptstyle R}),
\numberthis\label{eq_LosHbr}
\end{align*}
where we use $\dagger$ to signify the Hermitian transpose, $\bar{\phi}^{\mathrm{\scriptscriptstyle B}}$ and $\bar{\theta}^{\mathrm{\scriptscriptstyle B}}$ are the azimuth and elevation LoS angles of arrival at the BS and $\bar{\phi}^\mathrm{\scriptscriptstyle R}$, $\bar{\theta}^\mathrm{\scriptscriptstyle R}$ are the equivalent LoS angles of departure at the RIS. The RIS-BS link is assumed to have a strong LoS component. The weaker, unknown scattered component is modelled, similar to \eqref{eq:ray-based}, as 
\begin{align}\label{eq:ray-based_BR}
 \tilde{\mathbf{H}}^\mathrm{\scriptscriptstyle BR} = 
 \sum_{c=1}^{C^{\mathrm{\scriptscriptstyle BR}}}
 \sum_{s=1}^{S^{\mathrm{\scriptscriptstyle BR}}}
 \gamma^\mathrm{\scriptscriptstyle BR}_{c,s} 
 \mathbf{a}^{\mathrm{\scriptscriptstyle B}} 
 (\phi^{\mathrm{\scriptscriptstyle B}}_{c,s},
  \theta^\mathrm{\scriptscriptstyle B}_{c,s})
 \mathbf{a}^{\mathrm{\scriptscriptstyle R}\dagger}
 (\phi^\mathrm{\scriptscriptstyle R}_{c,s},
  \theta^\mathrm{\scriptscriptstyle R}_{c,s}),
\end{align}
with $\gamma^\mathrm{\scriptscriptstyle BR}_{c,s} = \sqrt{\tilde{\beta}_{c,s}^\mathrm{\scriptscriptstyle BR}}\exp(-j\Theta_{c,s}^\mathrm{\scriptscriptstyle BR})$. The RIS-BS pathloss is divided among the LoS and scattered component as $\bar{\beta}^\mathrm{\scriptscriptstyle BR} = K\beta^\mathrm{\scriptscriptstyle BR}/(K+1)$ and $\tilde{\beta}^\mathrm{\scriptscriptstyle BR} = \beta^\mathrm{\scriptscriptstyle BR}/(K+1)$, respectively. $K = \bar{\beta}^\mathrm{\scriptscriptstyle BR}/\tilde{\beta}^\mathrm{\scriptscriptstyle BR}$ gives the ratio of power in the specular compared to the scattered component. The global channel is given by
\begin{align}
\mathbf{h}^\mathrm{\scriptscriptstyle TOT} =
\mathbf{H}^\mathrm{\scriptscriptstyle BR}
\boldsymbol{\bar{\Phi}}\mathbf{h}^\mathrm{\scriptscriptstyle RU} +
\mathbf{h}^\mathrm{\scriptscriptstyle BU},
\label{eq:htot_true}
\end{align}
where $\boldsymbol{\bar{\Phi}}$ is the diagonal matrix of RIS phases used during data transmission, designed using the estimated CSI.

%
\section{Channel Estimation and Uplink Beamforming}\label{s:ch_est_beamforming}
We consider uplink transmission throughout a coherence interval of $T$ symbols comprising $\tau$ pilot symbols used for channel estimation, followed by $T-\tau$ information symbols. We follow the two-stage estimation procedure detailed in \cite{miller2021efficient}, which divides the training symbols into $\tau = \tau_1 + \tau_2$.
%
\subsection{Stage 1: Estimate CSI for RIS Phase Calibration}\label{ss:stage_1}
Here, we describe Stage 1 of the procedure in \cite{miller2021efficient}, in which LMMSE channel estimation is used to estimate $\mathbf{h}^\mathrm{\scriptscriptstyle BU}$ and $\mathbf{h}^{\mathrm{\scriptscriptstyle RU}'}$. The latter contains the UE-RIS channel for a subset of $N'\leq N$ \textit{active} RIS elements. Using the VURA RIS structure, the procedure in \cite{miller2021efficient} groups elements within each column of the VURA. Due to the relatively small variance of ray angles in elevation, for many environments the channel elements are much more correlated along columns of the VURA than along rows. The procedure therefore estimates the channel for a number of active elements per column, which are used to interpolate the remaining channel elements in the column using a range of methods. 

To estimate the UE-BS channel and the UE-RIS channel for the \textit{active} RIS elements, the UE sends $\tau_1$ pilot symbols. The $\tau_1$ received signals at the BS are stacked, giving the $M\tau_1 \times 1$ received vector 
\begin{align*}
\mathbf{r} &= 
\sqrt{\rho}\begin{bmatrix}
\mathbf{H}^{\mathrm{\scriptscriptstyle BR}'}\boldsymbol{\Psi}_1 s_1 & \mathbf{I}_M s_1\\
\vdots&\vdots\\
\mathbf{H}^{\mathrm{\scriptscriptstyle BR}'}\boldsymbol{\Psi}_{\tau_1} s_{\tau_1}& \mathbf{I}_M s_{\tau_1}
\end{bmatrix}
\begin{bmatrix}
\mathbf{h}^{\mathrm{\scriptscriptstyle RU}'}\\
\mathbf{h}^\mathrm{\scriptscriptstyle BU}
\end{bmatrix}
+ \mathbf{n}\\
&\triangleq \mathbf{V}\begin{bmatrix}
\mathbf{h}^{\mathrm{\scriptscriptstyle RU}'}\\
\mathbf{h}^\mathrm{\scriptscriptstyle BU}
\end{bmatrix} + \mathbf{n},\numberthis\label{eq:yvec}
\end{align*}
where $\rho$ is the uplink training SNR and $\mathbf{n} \sim \mathcal{CN}(0,\mathbf{I}_\mathrm{M\tau_1})$ is the additive noise at the BS. $\mathbf{H}^{\mathrm{\scriptscriptstyle BR}'}$ contains the channel coefficients of $\mathbf{H}^{\mathrm{\scriptscriptstyle BR}}$ corresponding to the \textit{active} RIS elements. The diagonal matrices, $\boldsymbol{\Psi}_t = \mathrm{diag}(\boldsymbol{\psi}_t)$, contain the RIS training phase vector, $\boldsymbol{\psi}_t$, for $t= 1\dots\tau_1$, taken from the columns of the $N'\times\tau_1$ training phase matrix, $\boldsymbol{\Phi}$. The design of $\boldsymbol{\Phi}$ is discussed in Sec.~\ref{s:design} Noting that the pilot symbols $[s_1\dots s_{\tau_1}]$ sent from the UE simply scale the channel measurements, we assume without loss of generality that $s_t = 1$ for $t = 1\dots\tau_1$.

The estimate of $\mathbf{V}$ at the BS uses the known columns of the LoS component, $\bar{\mathbf{H}}^{\mathrm{\scriptscriptstyle BR}}$, corresponding to the active elements. Hence
\begin{align}
\hat{\mathbf{V}} = \sqrt{\rho}\begin{bmatrix}
\bar{\mathbf{H}}^{\mathrm{\scriptscriptstyle BR}'}\boldsymbol{\Psi}_1 & \mathbf{I}_M\\
\vdots\\
\bar{\mathbf{H}}^{\mathrm{\scriptscriptstyle BR}'}\boldsymbol{\Psi}_{\tau_1} & \mathbf{I}_M
\end{bmatrix}.\label{eq:Vhat}
\end{align}
This is used to perform linear minimum-mean-squared-error CE, 
\begin{align}
\begin{bmatrix}
\hat{\mathbf{h}}^{\mathrm{\scriptscriptstyle RU}'}\\
\hat{\mathbf{h}}^\mathrm{\scriptscriptstyle BU}
\end{bmatrix} = (\hat{\mathbf{V}}^\dagger\hat{\mathbf{V}})^{-1}\hat{\mathbf{V}}^\dagger\mathbf{r}.\label{eq:ZF_CE}
\end{align}
The corresponding channel estimates are given by
\begin{align*}
\begin{bmatrix}
\hat{\mathbf{h}}^{\mathrm{\scriptscriptstyle RU}'}\\
\hat{\mathbf{h}}^\mathrm{\scriptscriptstyle BU}
\end{bmatrix} &=
\begin{bmatrix}
\mathbf{h}^{\mathrm{\scriptscriptstyle RU}'}\\
\mathbf{h}^\mathrm{\scriptscriptstyle BU}
\end{bmatrix} + 
(\hat{\mathbf{V}}^\dagger\hat{\mathbf{V}})^{-1}\hat{\mathbf{V}}^\dagger\tilde{\mathbf{V}}\begin{bmatrix}
\mathbf{h}^{\mathrm{\scriptscriptstyle RU}'}\\
\mathbf{h}^\mathrm{\scriptscriptstyle BU}
\end{bmatrix}\\ 
&+ (\hat{\mathbf{V}}^\dagger\hat{\mathbf{V}})^{-1}\hat{\mathbf{V}}^\dagger\mathbf{n}\\
& \triangleq
\begin{bmatrix}
 \mathbf{h}^{\mathrm{\scriptscriptstyle RU}'}\\
 \mathbf{h}^\mathrm{\scriptscriptstyle BU}
\end{bmatrix} +
\mathbf{\epsilon}_1 + \mathbf{\epsilon}_2,
\numberthis\label{eq:error}
\end{align*}
where $\tilde{\mathbf{V}} = \mathbf{V} - \hat{\mathbf{V}}$. Given that $\hat{\mathbf{V}}$ is fixed, $\mathbf{\epsilon}_1$ and $\mathbf{\epsilon}_2$ are independent and the covariance matrix of the error in \eqref{eq:error} is given by
\begin{align*}
\mathbb{E}[(\mathbf{\epsilon}_1 + \mathbf{\epsilon}_2)(\mathbf{\epsilon}_1 + \mathbf{\epsilon}_2)^\dagger] &= \mathbb{E}[\mathbf{\epsilon}_1\mathbf{\epsilon}_1^\dagger] + \mathbb{E}_{\mathbf{n}}[\mathbf{\epsilon}_2\mathbf{\epsilon}_2^\dagger],
\numberthis\label{eq:errorvariance}
\end{align*}
where the expectation, $\mathbb{E}[\mathbf{\epsilon}_1\mathbf{\epsilon}_1^\dagger]$, is performed over the ray phases and angles in $\mathbf{h}^{\mathrm{\scriptscriptstyle RU}'}$, $\mathbf{h}^{\mathrm{\scriptscriptstyle BU}}$, and $\tilde{\mathbf{H}}^{\mathrm{\scriptscriptstyle BR}'}$. This does not apply to $\mathbb{E}[\mathbf{\epsilon}_2\mathbf{\epsilon}_2^\dagger]$, which involves none of these channel elements. Hence, the expectation is performed over the random noise at the BS.  Section~\ref{s:analysis} analyses each of the terms in \eqref{eq:errorvariance} in order to design efficient training phase matrices.

Using the channel estimates for the active RIS elements within $\hat{\mathbf{h}}^{\mathrm{\scriptscriptstyle RU}'}$, a complete estimate, $\hat{\mathbf{h}}^\mathrm{\scriptscriptstyle RU}$, is obtained using one of three interpolation methods. The one-point ($1$-pt) method, requiring $N' = N_y$, estimates the channel for the middle RIS element of each column and assumes all elements in the column have the same channel coefficient. This is similar to the standard grouping approach used in \cite{You2020Intelligent,Zheng2020Intelligent}. The $2$-pt method, requiring $N' = 2N_y$, uses channel estimates from the top and bottom RIS element in each column, interpolating linearly between them on the complex plane. Finally, the $3$-pt method, requiring $N' = 3N_y$, estimates the channel for the top, middle, and bottom RIS elements in each column. By centering the channel coefficient from the middle RIS element at the origin of the complex plane and rotating the other two estimated coefficients about this point, a family of quadratic curves can be fitted. The $3$-pt method interpolates equidistant channel coefficients on the parabola with the minimum curvature. 
For comparison, we additionally consider an $N_z$-pt method where all RIS elements are active during Stage 1, requiring $N' = N$ and no interpolation.

The known value of $\bar{\mathbf{H}}^\mathrm{\scriptscriptstyle BR}$ and the estimates of $\hat{\mathbf{h}}^\mathrm{\scriptscriptstyle RU}$ and $\hat{\mathbf{h}}^\mathrm{\scriptscriptstyle BU}$ are then used to configure the RIS phases to be used during transmission, $\boldsymbol{\bar{\Phi}}$, using the desired RIS phase design. The initial estimate of the $M\times 1$ total channel $\hat{\mathbf{h}}^\mathrm{\scriptscriptstyle TOT}_{\tau_1}$ derived from CSI obtained in Stage 1 is then:
\begin{align}
\hat{\mathbf{h}}^\mathrm{\scriptscriptstyle TOT}_{\tau_1} = \bar{\mathbf{H}}^\mathrm{\scriptscriptstyle BR}\boldsymbol{\bar{\Phi}}\hat{\mathbf{h}}^\mathrm{\scriptscriptstyle RU} + \hat{\mathbf{h}}^\mathrm{\scriptscriptstyle BU}.\label{eq:htot_tau_1}
\end{align}
%
\subsection{Stage 2: Refine CSI for Uplink Beamforming}\label{ss:stage_2}
Following the configuration of the RIS phases for data transmission, the procedure in \cite{miller2021efficient} uses additional pilot symbols to re-estimate the total channel in Stage 2. This provides a more accurate total channel estimate, $\hat{\mathbf{h}}^\mathrm{\scriptscriptstyle TOT}_{\tau}$, for UL processing during data transmission.

The UE transmits a further $\tau_2$ symbols which are received at the BS as
\begin{align}
\mathbf{y}_t = \sqrt{\rho}\mathbf{h}^\mathrm{\scriptscriptstyle TOT}s_t + \mathbf{n}_t,
\end{align}
with $\mathbf{n}_t\sim\mathcal{CN}(0,\mathbf{I}_M)$ for $t = \tau_1+1 \dots \tau_1+\tau_2$. Again, the pilot symbols, $s_t$, have no bearing on the estimation and we continue to assume that $s_t = 1$. This provides $\tau_2$ additional measurements of the total channel which are used to refine $\hat{\mathbf{h}}_{\tau_1}^\mathrm{\scriptscriptstyle TOT}$ as
\begin{align}
\hat{\mathbf{h}}^\mathrm{\scriptscriptstyle TOT}_{\tau} = \omega_1\hat{\mathbf{h}}^\mathrm{\scriptscriptstyle TOT}_{\tau_1} + \frac{\omega_2}{\tau_2\sqrt{\rho}}\sum_{t = \tau_1 + 1}^{\tau}\mathbf{y}_{t},
\end{align}
with $\tau = \tau_1+\tau_2$. We consider two weighting vectors. $[\omega_1, \omega_2] = [1, \tau_2]/(\tau_2 + 1)$ \textit{refines} $\hat{\mathbf{h}}^\mathrm{\scriptscriptstyle TOT}_{\tau}$ using a weighted average of the initial estimate from Stage 1, and the new channel measurements from Stage 2. Alternatively, $[\omega_1, \omega_2] = [0,1]$, \textit{re-estimates} $\hat{\mathbf{h}}^\mathrm{\scriptscriptstyle TOT}_{\tau}$ by averaging the measurements in Stage 2 only. In this case, Stage 1 is exclusively used to set the RIS phases for data transmission, and Stage 2 is used to estimate the resulting end-to-end total channel.
%
\section{Error Variance Analysis}\label{s:analysis}
We begin by examining the error variance in order to design  efficient training phases. Without loss of generality, we perform analysis for the $N_z$-pt method with no interpolation. These results can be applied to analyse the estimation error for the \textit{active} elements of the interpolating methods by setting $N = N'$, $\mathbf{h}^{\mathrm{\scriptscriptstyle RU}} = \mathbf{h}^{\mathrm{\scriptscriptstyle RU}'}$, $\bar{\mathbf{H}}^{\mathrm{\scriptscriptstyle BR}} = \bar{\mathbf{H}}^{\mathrm{\scriptscriptstyle BR}'}$, etc.
\subsection{Error Variance from Processed Noise}\label{ss:e2}
Consider the estimation error arising from the processed noise term, $\epsilon_2$, in \eqref{eq:error} and \eqref{eq:errorvariance}. We note that this additionally represents the total error variance under pure LoS conditions in the BS-RIS channel, when $\epsilon_1 = \mathbf{0}^{(N'+M)\times1}$.

From \eqref{eq:error} and \eqref{eq:Vhat}, we have
\begin{align*}
\mathbb{E}[\epsilon_2\epsilon_2^\dagger] &= (\hat{\mathbf{V}}^\dagger\hat{\mathbf{V}})^{-1}\\
&= \rho^{-1}\begin{bmatrix}
\mathbf{A} & \mathbf{C}^\dagger\\
\mathbf{C} & \mathbf{D}
\end{bmatrix}^{-1}\\
& \triangleq \rho^{-1}\mathbf{X},\numberthis\label{eq:X_definition}
\end{align*}
with
\begin{align*}
\begin{rcases}
\mathbf{A} &= \sum_{t=1}^{\tau_1}\boldsymbol{\Psi}_{t}^\dagger\bar{\mathbf{H}}^{\mathrm{\scriptscriptstyle BR}\dagger}\bar{\mathbf{H}}^{\mathrm{\scriptscriptstyle BR}}\boldsymbol{\Psi}_{t} = M\bar{\beta}^{\mathrm{\scriptscriptstyle BR}}\mathbf{A}^\mathrm{\scriptscriptstyle R}\boldsymbol{\Omega}\mathbf{A}^{\mathrm{\scriptscriptstyle R}\dagger}\\
\mathbf{C} &= \bar{\mathbf{H}}^{\mathrm{\scriptscriptstyle BR}}\sum_{t=1}^{\tau_1}\boldsymbol{\Psi}_{t} = \sqrt{\bar{\beta}^{\mathrm{\scriptscriptstyle BR}}}\mathbf{a}^{\mathrm{\scriptscriptstyle B}}(\bar{\phi}^{\mathrm{\scriptscriptstyle B}},\bar{\theta}^{\mathrm{\scriptscriptstyle B}})\boldsymbol{\omega}^\dagger\mathbf{A}^{\mathrm{\scriptscriptstyle R}\dagger}\\
\mathbf{D} &= \tau_1\mathbf{I}_\mathrm{M},
\end{rcases}\numberthis\label{eq:ABD}
\end{align*}
where $\boldsymbol{\Omega} = \sum_{t=1}^{\tau_1}\boldsymbol{\psi}^{*}_{t}\boldsymbol{\psi}^{T}_{t}$, $\boldsymbol{\omega} = \sum_{t=1}^{\tau_1}\boldsymbol{\psi}^{*}_{t}$, and $\mathbf{A}^{\mathrm{\scriptscriptstyle R}} = \mathrm{diag}(\mathbf{a}^{\mathrm{\scriptscriptstyle R}}(\bar{\phi}^{\mathrm{\scriptscriptstyle R}},\bar{\theta}^{\mathrm{\scriptscriptstyle R}}))$. Using well-known results on the inverse of partitioned matrices \cite[0.7.3.1]{horn_johnson}, we obtain
\begin{align*}
\mathbf{X} = \begin{bmatrix}
\mathbf{X}_{11} & \mathbf{X}_{21}^\dagger\\
\mathbf{X}_{21} & \mathbf{X}_{22}
\end{bmatrix},\numberthis\label{eq:X_matrix}
\end{align*}
with
\begin{align*}
\begin{rcases}
\mathbf{X}_{11} &= [\mathbf{A} - \mathbf{C}^\dagger\mathbf{D}^{-1}\mathbf{C}]^{-1}\\
\mathbf{X}_{21} &= [\mathbf{C}\mathbf{A}^{-1}\mathbf{C}^\dagger - \mathbf{D}]^{-1}\mathbf{C}\mathbf{A}^{-1}\\
\mathbf{X}_{22} &= [\mathbf{D} - \mathbf{C}\mathbf{A}^{-1}\mathbf{C}^\dagger]^{-1}.
\end{rcases}\numberthis\label{eq:X_parts_general}
\end{align*}

To obtain the estimation error variances, we need only analyse the diagonal elements of $\mathbf{X}_{11}$ and $\mathbf{X}_{22}$, with the former corresponding to the error in $\hat{\mathbf{h}}^{\mathrm{\scriptscriptstyle RU}}$ and the latter to the error in $\hat{\mathbf{h}}^{\mathrm{\scriptscriptstyle BU}}$. Noting that $\mathbf{C}$ is rank-1, $\mathbf{X}_{11}$, $\mathbf{X}_{21}$, and $\mathbf{X}_{22}$ all require the inverse of a matrix with a rank-1 adjustment. Hence, we can use the inversion formula in \cite[0.7.4.1]{horn_johnson}, and the fact that $\mathbf{A}^{\mathrm{\scriptscriptstyle R}}$ is a unitary matrix, to obtain
\begin{align*}
\mathbf{X}_{11} = (M\bar{\beta}^\mathrm{\scriptscriptstyle BR})^{-1}[&\mathbf{A}^\mathrm{\scriptscriptstyle R}\mathbf{\Omega}^{-1}\mathbf{A}^{\mathrm{\scriptscriptstyle R}\dagger}\\
&+\alpha^{-1}\mathbf{A}^\mathrm{\scriptscriptstyle R}\boldsymbol{\Omega}^{-1}\boldsymbol{\omega}\boldsymbol{\omega}^\dagger \boldsymbol{\Omega}^{-1}\mathbf{A}^{\mathrm{\scriptscriptstyle R}\dagger}],\numberthis\label{eq:X11_general}
\end{align*}
\begin{align*}
\mathbf{X}_{21} = -(\alpha^2 M^2\bar{\beta}^\mathrm{\scriptscriptstyle BR})^{-\frac{1}{2}}\mathbf{a}^{\mathrm{\scriptscriptstyle B}}\boldsymbol{\omega}^{\dagger}\boldsymbol{\Omega^{-1}}\mathbf{A}^{\mathrm{\scriptscriptstyle R}\dagger},\numberthis\label{eq:X21_general}
\end{align*}
and
\begin{align*}
\mathbf{X}_{22} = (\tau_1)^{-1}\Big[\mathbf{I}_M + \Big(\frac{\boldsymbol{\omega}^\dagger\boldsymbol{\Omega}^{-1}\boldsymbol{\omega}}{M\alpha}\Big)\mathbf{a}^{\mathrm{\scriptscriptstyle B}}(\bar{\phi}^{\mathrm{\scriptscriptstyle B}},\bar{\theta}^{\mathrm{\scriptscriptstyle B}})\mathbf{a}^{\mathrm{\scriptscriptstyle B}\dagger}(\bar{\phi}^{\mathrm{\scriptscriptstyle B}},\bar{\theta}^{\mathrm{\scriptscriptstyle B}})\Big],\numberthis\label{eq:X22_general}
\end{align*}
with $\alpha = \tau_1 - \boldsymbol{\omega}^\dagger\boldsymbol{\Omega}^{-1}\boldsymbol{\omega}$. These results are used in Sec.~\ref{s:design} to design the training phase matrix.
%
\subsection{Error Variance from Unknown Scattering in the RIS-BS Channel}\label{ss:e1}
We now simplify $\boldsymbol{\epsilon}_1$, the error introduced from unknown scattering in the RIS-BS channel. From \eqref{eq:error}, we have
\begin{align*}
\mathbf{\epsilon}_1 &= (\hat{\mathbf{V}}^\dagger\hat{\mathbf{V}})^{-1}\hat{\mathbf{V}}^\dagger\tilde{\mathbf{V}}\begin{bmatrix}
\mathbf{h}^{\mathrm{\scriptscriptstyle RU}}\\
\mathbf{h}^\mathrm{\scriptscriptstyle BU}
\end{bmatrix}\\
&\triangleq\mathbf{X}\begin{bmatrix}
\tilde{\mathbf{A}} & \tilde{\mathbf{B}}\\
\tilde{\mathbf{C}} & \tilde{\mathbf{D}}\\
\end{bmatrix}\begin{bmatrix}
\mathbf{h}^{\mathrm{\scriptscriptstyle RU}}\\
\mathbf{h}^\mathrm{\scriptscriptstyle BU}
\end{bmatrix},
\end{align*}
using \eqref{eq:X_definition} and giving $\rho^{-1}\hat{\mathbf{V}}^\dagger\tilde{\mathbf{V}}$ in block matrix form with
\begin{align*}
\begin{rcases}
\tilde{\mathbf{A}} &= \sum_{t = 1}^{\tau_1}\boldsymbol{\Psi}_{t}^{\dagger}\bar{\mathbf{H}}^{\mathrm{\scriptscriptstyle BR}\dagger}\tilde{\mathbf{H}}^{\mathrm{\scriptscriptstyle BR}}\boldsymbol{\Psi}_{t}\\
\tilde{\mathbf{B}} &= \mathbf{0}^{N\times M}\\
\tilde{\mathbf{C}} &= \tilde{\mathbf{H}}^{\mathrm{\scriptscriptstyle BR}}\mathrm{diag}(\boldsymbol{\omega}^*)\\
\tilde{\mathbf{D}} &= \mathbf{0}^{M\times M},
\end{rcases}\numberthis\label{eq:ABCDtilde}
\end{align*}
which follows from the definitions of $\hat{\mathbf{V}}$ and $\tilde{\mathbf{V}}$. Hence,
\begin{align*}
\epsilon_1&= \begin{bmatrix}
\mathbf{X}_{11}\tilde{\mathbf{A}}\mathbf{h}^{\mathrm{\scriptscriptstyle RU}} + \mathbf{X}_{21}^{\dagger}\tilde{\mathbf{C}}\mathbf{h}^{\mathrm{\scriptscriptstyle RU}}\\
\mathbf{X}_{21}\tilde{\mathbf{A}}\mathbf{h}^{\mathrm{\scriptscriptstyle RU}} + \mathbf{X}_{22}\tilde{\mathbf{C}}\mathbf{h}^{\mathrm{\scriptscriptstyle RU}}
\end{bmatrix}\\
&\triangleq\begin{bmatrix}
\mathbf{Y}_{11}\\
\mathbf{Y}_{22}\numberthis\label{eq:Ydef}
\end{bmatrix}.
\end{align*}
The diagonal elements of $\boldsymbol{\epsilon}_{1}\boldsymbol{\epsilon}_{1}^{\dagger}$ are then comprised of the diagonal elements of $\mathbf{Y}_{11}\mathbf{Y}_{11}^{\dagger}$, which give the $\epsilon_1$ component of the error variance in $\hat{\mathbf{h}}^{\mathrm{\scriptscriptstyle RU}}$, and the diagonal elements of $\mathbf{Y}_{22}\mathbf{Y}_{22}^{\dagger}$, which give the same for $\hat{\mathbf{h}}^{\mathrm{\scriptscriptstyle BU}}$.
Using the definitions of $\mathbf{X}_{11}$ and $\mathbf{X}_{21}$, we simplify the first term in $\mathbf{Y}_{11}$ as
\begin{align*}
\mathbf{X}_{11}\tilde{\mathbf{A}}\mathbf{h}^{\mathrm{\scriptscriptstyle RU}} &= (M\bar{\beta}^{\mathrm{\scriptscriptstyle BR}})^{-1}[\mathbf{A}^{\mathrm{\scriptscriptstyle R}}\boldsymbol{\Omega}^{-1}\mathbf{A}^{\mathrm{\scriptscriptstyle R}\dagger}\tilde{\mathbf{A}}\mathbf{h}^{\mathrm{\scriptscriptstyle RU}}\\
&+ \alpha^{-1}\mathbf{A}^{\mathrm{\scriptscriptstyle R}}\boldsymbol{\Omega}^{-1}\boldsymbol{\omega}\boldsymbol{\omega}^{\dagger}
\boldsymbol{\Omega}^{-1}\mathbf{A}^{\mathrm{\scriptscriptstyle R}\dagger}\tilde{\mathbf{A}}\mathbf{h}^{\mathrm{\scriptscriptstyle RU}}].\numberthis\label{eq:Y11_1}
\end{align*}
Now,
\begin{align*}
&\boldsymbol{\Omega}^{-1}\mathbf{A}^{\mathrm{\scriptscriptstyle R}\dagger}\tilde{\mathbf{A}}\mathbf{h}^{\mathrm{\scriptscriptstyle RU}}\\
&= \sqrt{\bar{\beta}^{\mathrm{\scriptscriptstyle BR}}}\boldsymbol{\Omega}^{-1}\sum_{t=1}^{\tau_1}\boldsymbol{\psi}_{t}^{*}\mathbf{a}^{\mathrm{\scriptscriptstyle B}\dagger}(\bar{\phi}^{\mathrm{\scriptscriptstyle B}}, {\bar{\theta}^\mathrm{\scriptscriptstyle B}})\tilde{\mathbf{H}}^{\mathrm{\scriptscriptstyle BR}}\mathrm{diag}(\mathbf{h}^{\mathrm{\scriptscriptstyle RU}})\boldsymbol{\psi}_{t}\\
&\triangleq\sqrt{\bar{\beta}^{\mathrm{\scriptscriptstyle BR}}} \boldsymbol{\Omega}^{-1}\sum_{t=1}^{\tau_1}\boldsymbol{\psi}_{t}^{*}\Lambda\boldsymbol{\psi}_{t}\\
&=\sqrt{\bar{\beta}^{\mathrm{\scriptscriptstyle BR}}}\boldsymbol{\Omega}^{-1}\sum_{t=1}^{\tau_1}\boldsymbol{\psi}_{t}^{*}\boldsymbol{\psi}_{t}^{\mathrm{T}}\Lambda^{\mathrm{T}}\\
&=\sqrt{\bar{\beta}^{\mathrm{\scriptscriptstyle BR}}}\Lambda^{\mathrm{T}},\numberthis\label{eq:OmegaAAhru}
\end{align*}
where the $1\times N$ vector $\Lambda$ is independent of training phases. Substituting \eqref{eq:OmegaAAhru} into \eqref{eq:Y11_1}, we obtain
\begin{align*}
\mathbf{X}_{11}\tilde{\mathbf{A}}\mathbf{h}^{\mathrm{\scriptscriptstyle RU}} &= (M^2\bar{\beta}^{\mathrm{\scriptscriptstyle BR}})^{-\frac{1}{2}}(\mathbf{A}^{\mathrm{\scriptscriptstyle R}} + \alpha^{-1}\mathbf{A}^{\mathrm{\scriptscriptstyle R}}\boldsymbol{\Omega}^{-1}\boldsymbol{\omega}\boldsymbol{\omega}^{\dagger})\Lambda^{\mathrm{T}}.
\end{align*}
Following a similar process for the second term of $\mathbf{Y}_{11}$ yields
\begin{align*}
\mathbf{X}_{21}^{\dagger}\tilde{\mathbf{C}}\mathbf{h}^{\mathrm{\scriptscriptstyle RU}} &= -(\alpha^2 M^2\bar{\beta}^{\mathrm{\scriptscriptstyle BR}})^{-\frac{1}{2}}\mathbf{A}^{\mathrm{\scriptscriptstyle R}}\boldsymbol{\Omega}^{-1}\boldsymbol{\omega}\boldsymbol{\omega}^{\dagger}\Lambda^{\mathrm{T}}.\numberthis\label{eq:Y11_2}
\end{align*}
Hence, $\mathbf{Y}_{11}$ becomes 
\begin{align*}
\mathbf{Y}_{11} &= \big(M^2\bar{\beta}^{\mathrm{\scriptscriptstyle BR}}\big)^{-\frac{1}{2}}\mathbf{A}^{\mathrm{\scriptscriptstyle R}}\Lambda^{\mathrm{T}}.\numberthis\label{eq:Y11_final}
\end{align*}
The result in \eqref{eq:Y11_final} indicates that the error in $\hat{\mathbf{h}}^{\mathrm{\scriptscriptstyle RU}}$ caused by unknown scattering in the RIS-BS channel cannot be affected by the training phase design as $\mathbf{A}$ and $\Lambda$ are independent of the training phases.

Analysing the first term in $\mathbf{Y}_{22}$ and using the result in \eqref{eq:OmegaAAhru}, we have
\begin{align*}
\mathbf{X}_{21}\tilde{\mathbf{A}}\mathbf{h}^{\mathrm{\scriptscriptstyle RU}} &= -(\alpha M)^{-1}\mathbf{a}^{\mathrm{\scriptscriptstyle B}}(\bar{\phi}^{\mathrm{\scriptscriptstyle B}}, \bar{\theta}^{\mathrm{\scriptscriptstyle B}})\Lambda\boldsymbol{\omega}^{*},\numberthis\label{eq:Y22_1}
\end{align*}
while the second term becomes
\begin{align*}
\mathbf{X}_{22}\tilde{\mathbf{C}}\mathbf{h}^{\mathrm{\scriptscriptstyle RU}} &= \tau_1^{-1}[\tilde{\mathbf{H}}^{\mathrm{\scriptscriptstyle BR}}\mathrm{diag}(\mathbf{h}^{\mathrm{\scriptscriptstyle RU}})\\
&+ (M\alpha)^{-1}\boldsymbol{\omega}^{\dagger}\boldsymbol{\Omega}^{-1}\boldsymbol{\omega}\mathbf{a}^{\mathrm{\scriptscriptstyle B}}(\bar{\phi}^{\mathrm{\scriptscriptstyle B}}, \bar{\theta}^{\mathrm{\scriptscriptstyle B}})\Lambda]\boldsymbol{\omega}^{*}.\numberthis\label{eq:Y22_2}
\end{align*}
Using the definition of $\alpha$ and performing some straightforward algebra gives
\begin{align*}
\mathbf{Y}_{22} = (\tau_1M)^{-1}[\tilde{M\mathbf{H}}^{\mathrm{\scriptscriptstyle BR}}\mathrm{diag}(\mathbf{h}^{\mathrm{\scriptscriptstyle RU}}) - \mathbf{a}^{\mathrm{\scriptscriptstyle B}}(\bar{\phi}^{\mathrm{\scriptscriptstyle B}}, \bar{\theta}^{\mathrm{\scriptscriptstyle B}})\Lambda]\boldsymbol{\omega}^{*}.\numberthis\label{eq:Y22_3}
\end{align*}
The analysis in Sec.~\ref{ss:e2} and~\ref{ss:e1} is used to inform the training phase design in Sec.~\ref{s:design}.
%
%
\section{Optimal training phase design}\label{s:design}
Having simplified the terms in \eqref{eq:errorvariance}, we find that the error variance is dictated by $\mathbf{X}_{11}$, $\mathbf{X}_{22}$, $\mathbf{Y}_{22}\mathbf{Y}_{22}^{\dagger}$, and $\mathbf{Y}_{11}\mathbf{Y}_{11}^{\dagger}$, the latter of which is independent of training phases. Hence, this section presents the MDFT training matrix which  minimizes the trace of the remaining three variables, and therefore the sum of the error variances.

The MDFT design consists of an $N\times \tau_1$ \textit{modified DFT} matrix of training phases, constructed from the last $N$ rows of a $\tau_1\times \tau_1$ DFT matrix with $\tau_1 = N+1$, such that
	\begin{align*}
	\boldsymbol{\Phi}^{\mathrm{\scriptscriptstyle MD}} =
	\begin{bmatrix}
	1 & e^{jx} & e^{j2x} & e^{j3x} & \cdots & e^{jNx}\\
	1 & e^{j2x} & e^{j4x} & e^{j6x} & \cdots & e^{j2Nx}\\
	1 & e^{j3x} & e^{j6x} & e^{j9x} & \cdots & e^{j3Nx}\\
	\vdots & \vdots & \vdots & \vdots & \ddots & \vdots\\
	1 & e^{jNx} & e^{j2Nx} & e^{j3Nx} & \cdots & e^{j{N^2}x}
	\end{bmatrix}
	\end{align*}
	with $x = 2\pi/(N+1)$.
\begin{result}\label{res:phases}
The MDFT design is optimal in the sense that it minimizes the sum of estimation error variances given by  $\mathrm{tr}(\mathbf{X}_{11}+\mathbf{X}_{22}+\mathbf{Y}_{11}\mathbf{Y}_{11}^{\dagger}+\mathbf{Y}_{22}\mathbf{Y}_{22}^{\dagger})$.
\end{result}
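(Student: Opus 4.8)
The plan is to reduce the four traces to a function of only the two phase-dependent quantities $\boldsymbol{\Omega}$ and $\boldsymbol{\omega}$, and then bound each contribution separately. Starting from the simplified expressions \eqref{eq:X11_general}, \eqref{eq:X22_general}, \eqref{eq:Y11_final} and \eqref{eq:Y22_3}, I would first take traces and exploit the two structural facts that $\mathbf{A}^{\mathrm{\scriptscriptstyle R}}$ is unitary and $\|\mathbf{a}^{\mathrm{\scriptscriptstyle B}}(\bar{\phi}^{\mathrm{\scriptscriptstyle B}},\bar{\theta}^{\mathrm{\scriptscriptstyle B}})\|^2 = M$. Using the cyclic property of the trace together with $\mathbf{A}^{\mathrm{\scriptscriptstyle R}\dagger}\mathbf{A}^{\mathrm{\scriptscriptstyle R}}=\mathbf{I}_N$ and $\mathrm{tr}(\mathbf{a}^{\mathrm{\scriptscriptstyle B}}\mathbf{a}^{\mathrm{\scriptscriptstyle B}\dagger})=M$, the two $\mathbf{X}$ traces collapse to
\begin{align*}
\mathrm{tr}(\mathbf{X}_{11}) &= (M\bar{\beta}^{\mathrm{\scriptscriptstyle BR}})^{-1}\big[\mathrm{tr}(\boldsymbol{\Omega}^{-1}) + \alpha^{-1}\boldsymbol{\omega}^{\dagger}\boldsymbol{\Omega}^{-2}\boldsymbol{\omega}\big],\\
\mathrm{tr}(\mathbf{X}_{22}) &= \tau_1^{-1}\big[M + \alpha^{-1}\boldsymbol{\omega}^{\dagger}\boldsymbol{\Omega}^{-1}\boldsymbol{\omega}\big],
\end{align*}
while $\mathbf{Y}_{11}\mathbf{Y}_{11}^{\dagger}$ is constant in the phases by \eqref{eq:Y11_final}, and \eqref{eq:Y22_3} shows $\mathbf{Y}_{22}=\mathbf{G}\boldsymbol{\omega}^{*}$ for a matrix $\mathbf{G}$ independent of the phases, so that $\mathrm{tr}(\mathbf{Y}_{22}\mathbf{Y}_{22}^{\dagger})=\boldsymbol{\omega}^{\mathrm{T}}\mathbf{G}^{\dagger}\mathbf{G}\boldsymbol{\omega}^{*}$ is a nonnegative quadratic form in $\boldsymbol{\omega}$.

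With this reduction the objective becomes $(M\bar{\beta}^{\mathrm{\scriptscriptstyle BR}})^{-1}\mathrm{tr}(\boldsymbol{\Omega}^{-1}) + M/\tau_1 + \mathrm{tr}(\mathbf{Y}_{11}\mathbf{Y}_{11}^{\dagger})$ plus a sum $R(\boldsymbol{\Omega},\boldsymbol{\omega})$ of three terms, each a nonnegative quadratic form in $\boldsymbol{\omega}$ that vanishes at $\boldsymbol{\omega}=\mathbf{0}$; here $\alpha>0$ on the feasible set, since invertibility of $\hat{\mathbf{V}}^{\dagger}\hat{\mathbf{V}}$ forces its Schur complement $\alpha=\tau_1-\boldsymbol{\omega}^{\dagger}\boldsymbol{\Omega}^{-1}\boldsymbol{\omega}$ to be positive. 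The only remaining phase-dependent piece is $\mathrm{tr}(\boldsymbol{\Omega}^{-1})$. Because every training vector has unit-modulus entries, the diagonal of $\boldsymbol{\Omega}=\sum_{t}\boldsymbol{\psi}_t^{*}\boldsymbol{\psi}_t^{\mathrm{T}}$ is identically $\tau_1$, so $\mathrm{tr}(\boldsymbol{\Omega})=N\tau_1$ for \emph{every} admissible design. I would then apply the AM--HM (equivalently Cauchy--Schwarz) inequality to the eigenvalues of the positive-definite $\boldsymbol{\Omega}$ to get $\mathrm{tr}(\boldsymbol{\Omega}^{-1})\ge N^2/\mathrm{tr}(\boldsymbol{\Omega})=N/\tau_1$, with equality if and only if $\boldsymbol{\Omega}=\tau_1\mathbf{I}_N$.

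Combining, the objective is bounded below by $(M\bar{\beta}^{\mathrm{\scriptscriptstyle BR}})^{-1}N/\tau_1 + M/\tau_1 + \mathrm{tr}(\mathbf{Y}_{11}\mathbf{Y}_{11}^{\dagger})$, attained precisely when $\boldsymbol{\Omega}=\tau_1\mathbf{I}_N$ and $\boldsymbol{\omega}=\mathbf{0}$ hold simultaneously. To close the argument I would verify that MDFT realises both. Writing $[\boldsymbol{\psi}_t]_k=e^{jk(t-1)x}$ with $x=2\pi/(N+1)$ and $\tau_1=N+1$, the $(k,l)$ entry of $\boldsymbol{\Omega}$ is the geometric sum $\sum_{m=0}^{N}e^{j(l-k)mx}$, which equals $\tau_1$ for $k=l$ and vanishes for $k\ne l$ because $(l-k)x$ is a nonzero multiple of $2\pi/(N+1)$ with $(N+1)x=2\pi$; hence $\boldsymbol{\Omega}=\tau_1\mathbf{I}_N$. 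The same root-of-unity cancellation gives $[\boldsymbol{\omega}]_k=\sum_{m=0}^{N}e^{-jkmx}=0$ for every $k\in\{1,\dots,N\}$, since the deleted (DC) row of the full $(N+1)\times(N+1)$ DFT is exactly the one for which this sum would fail to cancel. Thus MDFT meets the lower bound and is optimal.

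The main obstacle is not any single calculation but the decoupling logic: the two optimality conditions constrain the same set of unit-modulus phases, so it is not \emph{a priori} clear that $\mathrm{tr}(\boldsymbol{\Omega}^{-1})$ can be minimised while simultaneously annihilating $\boldsymbol{\omega}$. The argument is made rigorous by lower-bounding the total by the sum of the separate minima of the $\boldsymbol{\Omega}$-part and the $\boldsymbol{\omega}$-part (each valid for every admissible design) and then exhibiting MDFT as a single design that meets both at once; deleting the DC row, which is what taking $\tau_1=N+1$ rather than $N$ permits, is precisely what makes the two conditions compatible. I would also take care to confirm $\alpha>0$ throughout the feasible region, so that the $\boldsymbol{\omega}$-dependent terms are genuinely nonnegative and not merely zero at the claimed optimum.
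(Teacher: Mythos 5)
Your proposal is correct and follows essentially the same route as the paper's proof: reduce the objective to the phase-dependent quantities $\boldsymbol{\Omega}$ and $\boldsymbol{\omega}$, observe that $\mathbf{Y}_{11}\mathbf{Y}_{11}^{\dagger}$ is phase-independent and that all $\boldsymbol{\omega}$-dependent terms are nonnegative and vanish at $\boldsymbol{\omega}=\mathbf{0}$, then bound $\mathrm{tr}(\boldsymbol{\Omega}^{-1})\geq N^2/\mathrm{tr}(\boldsymbol{\Omega})=N/\tau_1$ via Cauchy--Schwarz on the eigenvalues using $\mathrm{tr}(\boldsymbol{\Omega})=\tau_1 N$, and check that MDFT attains both conditions simultaneously. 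Your additions (the explicit root-of-unity verification that $\boldsymbol{\Omega}^{\mathrm{\scriptscriptstyle MD}}=\tau_1\mathbf{I}_N$ and $\boldsymbol{\omega}^{\mathrm{\scriptscriptstyle MD}}=\mathbf{0}$, and the Schur-complement argument for $\alpha>0$) are clean refinements of steps the paper states more informally, not a different method.
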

\begin{proof}
See Appendix~\ref{App}.
\end{proof}
Next, we compare the MDFT design  with the DFT approach \cite{Wan2020Broadband,Jensen2020Optimal,Nadeem2020Intelligent,Zheng2020Intelligent} using $\mathrm{tr}( \mathbf{X})$ for simplicity as it is the dominant error covariance matrix. Using the MDFT approach we have
\begin{align*}
\boldsymbol{\Omega}^{\mathrm{\scriptscriptstyle MD}} &= (N+1)\mathbf{I}_{N},\numberthis\label{eq:Ow}\\
\boldsymbol{\omega}^{\mathrm{\scriptscriptstyle MD}} &= \mathbf{0}^{N\times 1}.
\end{align*}
Substituting \eqref{eq:Ow} into \eqref{eq:ABD} gives
\begin{align*}
\mathbf{A}^{\mathrm{\scriptscriptstyle MD}} &= M\bar{\beta}^{\mathrm{\scriptscriptstyle BR}}(N+1)\mathbf{I}_N\\
\mathbf{C}^{\mathrm{\scriptscriptstyle MD}} &= \mathbf{0}^{M\times N}.\numberthis\label{eq:AC_moddft}
\end{align*}
%
This reduces $\mathrm{tr}(\mathbf{X}_{11})$ and $\mathrm{tr}(\mathbf{X}_{22})$ to
\begin{align*}
\mathrm{tr}(\mathbf{X}_{11}^{\mathrm{\scriptscriptstyle MD}}) = N(M\bar{\beta}^{\mathrm{\scriptscriptstyle BR}}(N+1))^{-1}
\end{align*}
and
\begin{align*}
\mathrm{tr}(\mathbf{X}_{22}^{\mathrm{\scriptscriptstyle MD}}) = M(N+1)^{-1}.
\end{align*}
For the DFT approach, we follow the general procedure in \cite{miller2021efficient} where $\tau_1 = N+1$ training vectors are used with the first $N$ being the columns of an $N\times N$ DFT matrix, $\mathbf{W}$. The $\tau_1^\mathrm{th}$ vector is allowed to be an arbitrary training phase vector, $\mathbf{w}=[w_1,w_2, \ldots , w_N]^T$. With this notation, we obtain
\begin{align*}
\boldsymbol{\Omega}^{\mathrm{\scriptscriptstyle DFT}} &= (\mathbf{W}\mathbf{W}^\dagger + \mathbf{w}\mathbf{w}^\dagger)^* = N\mathbf{I}_N + \mathbf{w}^*\mathbf{w}^\mathrm{T}\\
\boldsymbol{\omega}^{\mathrm{\scriptscriptstyle DFT}} &= N\mathbf{1}^{N\times 1}_{11} + \mathbf{w}^*,
\end{align*}
where $\mathbf{1}^{a\times b}_{c,d}$ is an $a\times b$ matrix where the element in row $c$ and column $d$ is 1 and all others are 0. With these results and using \cite[0.7.4.2]{horn_johnson}, we have
\begin{align*}
\mathrm{tr}(\mathbf{X}_{11}^{\mathrm{\scriptscriptstyle DFT}}) &= (M\bar{\beta}^{\mathrm{\scriptscriptstyle BR}})^{-1}(1 -1/N - (\mathrm{Re}({w}_1)-1)^{-1}),\numberthis\label{eq:trace_X11_DFT}
\end{align*}
which depends only on the first element of $\mathbf{w}$, assuming the remaining entries of $\mathbf{w}$ are non-zero, and is minimized at ${w}_1 = e^{j\pi} = -1$, giving
\begin{align}
\mathrm{tr}(\mathbf{X}_{11}^{\mathrm{\scriptscriptstyle DFT}}) & \ge (M\bar{\beta}^{\mathrm{\scriptscriptstyle BR}})^{-1}\left(\frac{3}{2}-\frac{1}{N}\right)\notag\\
&=\mathrm{tr}(\mathbf{X}_{11}^{\mathrm{\scriptscriptstyle MD}}) + (M\bar{\beta}^{\mathrm{\scriptscriptstyle BR}})^{-1}\left(\frac{1}{2} - (N(N+1))^{-1}\right),\label{firstDFT}\\
\mathrm{tr}(\mathbf{X}_{22}^{\mathrm{\scriptscriptstyle DFT}}) & \ge (\tau_1^2)^{-1}\left(\tau_1 M + \frac{1}{2}(N^2-1)\right)\notag\\ &=\mathrm{tr}(\mathbf{X}_{22}^{\mathrm{\scriptscriptstyle MD}}) + \frac{1}{2} - (N+1)^{-1}.\label{secondDFT}
\end{align}
From Theorem 1, the MDFT approach must outperform the best DFT approach. This is verified by \eqref{firstDFT}-\eqref{secondDFT} but the improvements are seen to be small.

\section{Simplified SNR analysis}\label{simplified}
While simulations can give precise details of the performance of the channel estimation schemes, it is also useful to obtain some analytical insight into the effect of channel estimation error on overall performance. For performance, we consider the SNR of a single user using the optimal RIS design in \cite{Singh2020Optimal}.  Here, the optimal RIS transmission phases are given by:
\begin{align*}
\boldsymbol{{\Phi}}^*&= \nu \mathrm{diag}(\mathbf{a}^\mathrm{\scriptscriptstyle R}(\bar{\phi}^\mathrm{\scriptscriptstyle R},\bar{\theta}^\mathrm{\scriptscriptstyle R}))\mathrm{diag}(\exp(j\angle{\mathbf{h}}^\mathrm{RU\dagger})),\numberthis\label{eq:optimal_phases}
\end{align*}
where $\angle{\mathbf{h}}^\mathrm{\scriptscriptstyle RU}$ is defined by $(\angle{\mathbf{h}}^\mathrm{\scriptscriptstyle RU})_r=({\mathbf{h}}^\mathrm{\scriptscriptstyle RU})_r/|({\mathbf{h}}^\mathrm{\scriptscriptstyle RU})_r|$, and
\begin{align*}
\nu = \frac{\mathbf{a}^{\mathrm{\scriptscriptstyle B}\dagger}(\bar{\phi}^\mathrm{\scriptscriptstyle B},\bar{\theta}^\mathrm{\scriptscriptstyle B}){\mathbf{h}}^\mathrm{\scriptscriptstyle BU}}{|\mathbf{a}^{\mathrm{\scriptscriptstyle B}\dagger}(\bar{\phi}^\mathrm{\scriptscriptstyle B},\bar{\theta}^\mathrm{\scriptscriptstyle B}){\mathbf{h}}^\mathrm{\scriptscriptstyle BU}|}.
\end{align*}

For the errors, in order to make analytical progress, we make the following simplifying assumptions. First, we assume the RIS phases are set based on errored CSI, but the total channel (using the imperfect RIS) is then perfectly known. Secondly, for the errors in the RIS, we use a simple AWGN model where the phases used are the optimal phases plus iid Gaussian error. Hence, $\bar{\boldsymbol{\Phi}} = \boldsymbol{\Phi}^{*}\times \mathrm{diag}(\mathbf{e})$ with $\mathbf{e}=[e_1, e_2, \ldots ,e_N]^T$ and $e_i\sim\mathcal{N}(0,\sigma_e^2)$.

 Using results from \cite{Singh2020Optimal} and the notation $ \mathbf{h}^\mathrm{\scriptscriptstyle RU}=[h_1^\mathrm{\scriptscriptstyle RU}, h_2^\mathrm{\scriptscriptstyle RU}, \ldots ,h_N^\mathrm{\scriptscriptstyle RU}]^T$, the resulting mean SNR, assuming pure LOS for the RIS-BS channel, can be written as
\begin{align}\label{meansnr}
&\mathbb{E}[\overline{\mathrm{SNR}}] = \rho^\mathrm{\scriptscriptstyle D}\mathbb{E}[\lVert\mathbf{h}^\mathrm{\scriptscriptstyle TOT}\rVert^2]\notag\\
&= \rho^\mathrm{\scriptscriptstyle D}\Bigg\{\hspace{-1mm}M\beta^\mathrm{\scriptscriptstyle BU}\hspace{-1mm}+\hspace{-1mm}2\sqrt{\beta^\mathrm{\scriptscriptstyle BR}}
\Xi
%
\mathbb{E}\Bigg[\sum_{n=1}^{N}|{h}_n^\mathrm{\scriptscriptstyle RU}|\Bigg]\mathbb{E}[\exp(je_i)]  + \beta^\mathrm{\scriptscriptstyle BR}M \notag\\
&\times \mathbb{E}\Bigg[\sum_{n=1}^{N}|{h}_n^\mathrm{\scriptscriptstyle RU}|^2+\sum_{n=1}^{N}\sum_{\substack{n'=1\\ n'\neq n}}^{N}|{h}_n^\mathrm{\scriptscriptstyle RU}||{h}_{n'}^\mathrm{\scriptscriptstyle RU}|\exp(j(e_n-e_{n'}))\Bigg]\Bigg\}\notag\\
&\geq \rho^\mathrm{\scriptscriptstyle D}\Bigg\{\hspace{-1mm}M\beta^\mathrm{\scriptscriptstyle BU}\hspace{-1mm}+\hspace{-1mm}2\sqrt{\beta^\mathrm{\scriptscriptstyle BR}}
\Xi
\mathbb{E}\Bigg[\sum_{n=1}^{N}|{h}_n^\mathrm{\scriptscriptstyle RU}|\Bigg]\mathbb{E}[\exp(je_n)]\notag\\
&+ \beta^\mathrm{\scriptscriptstyle BR}M\mathbb{E}\Bigg[\left(\sum_{n=1}^{N}|\mathbf{h}_n^\mathrm{\scriptscriptstyle RU}|\right)^2\Bigg]\mathbb{E}[\exp(j(e_n-e_{n'}))]\Bigg\}\notag\\
&\triangleq\rho^\mathrm{\scriptscriptstyle D}\{\mathrm{S}_1 + \mathrm{S}_2\exp(-\sigma_e^2/2) + \mathrm{S}_3\exp(-\sigma_e^2)\},
\end{align}
with $\Xi\triangleq\mathbb{E}\Big[|\mathbf{a}^{\mathrm{\scriptscriptstyle B}\dagger}(\bar{\phi}^\mathrm{\scriptscriptstyle B},\bar{\theta}^\mathrm{\scriptscriptstyle B})\mathbf{h}^\mathrm{\scriptscriptstyle BU}|\Big]$ and %
where the final step results from the properties of the Gaussian distributed phase errors. In contrast, the optimal value is $\mathbb{E}[{\mathrm{SNR}^*}]=\rho^\mathrm{\scriptscriptstyle D}\{\mathrm{S}_1 + \mathrm{S}_2+ \mathrm{S}_3\}$.

In the mean SNR, the $\mathrm{S}_3$ term dominates when the RIS-assisted path has reasonable power and it is here that the RIS design becomes most important as this is the situation where the RIS can provide substantial benefits. In this scenario, the SNR is reduced, approximately, by a scaling factor of $\exp(-\sigma_e^2)$. The corresponding drop in SE is
\begin{align}
\mathbb{E}[\mathrm{SE}^{*}]-\mathbb{E}[\overline{\mathrm{SE}}]&\approx\log_{2}\Big(\exp(\sigma_e^2)\Big)=\sigma_e^2\log_2(e).\numberthis\label{eq:SE_drop}
\end{align}
Although \eqref{eq:SE_drop} is extremely simple, it is surprisingly accurate and is used to explain the numerical results in Sec.~\ref{s:results}.
\section{Numerical Results}\label{s:results}
We consider a cell with radius $r = 100$m and exclusion zone $r_0 = 15$m. A VURA RIS with $N_y = 12$ columns and a BS with $M=32$ and $d^\mathrm{\scriptscriptstyle B} = 0.5$ are positioned at $-5$m and $5$m on the $y$-axis, respectively. For $\beta^\mathrm{\scriptscriptstyle BU}$ and $\beta^\mathrm{\scriptscriptstyle RU}$ we use $\Gamma = 3.7$ and $\sigma_\mathrm{sf} = 5.5\mathrm{dB}$ as in \cite{sangodoyin_cluster_2018}. This level of shadowing is included in all figures except Figs.~\ref{fig:NMSE_vs_K_narrow} and~\ref{fig:NMSE_vs_K_wide}, as discussed later.     For $\beta^\mathrm{\scriptscriptstyle BR}$ we set $\Gamma = 2$ and there is no shadowing.  From \cite{3GPP} we set $\eta = 0.1$ for all channels. During training, we set $\rho$ based on the pathloss through the RIS-assisted link such that $\rho/(\beta^\mathrm{\scriptscriptstyle RU}\beta^\mathrm{\scriptscriptstyle BR}) = 5\mathrm{dB}$.

In this section, the CE procedure performance is evaluated in three ways. We examine the estimation error in $\mathbf{h}^\mathrm{\scriptscriptstyle TOT}_{\tau_1}$ in \eqref{eq:htot_tau_1} after Stage 1, the resulting phase error in the RIS transmission phases, and the overall performance following re-estimation in Stage 2. To this end, we set the RIS transmission phases by substituting the estimated channels into \eqref{eq:optimal_phases}. This gives 
\begin{align*}
\boldsymbol{\bar{\Phi}}&= \hat{\nu} \mathrm{diag}(\mathbf{a}^\mathrm{\scriptscriptstyle R}(\bar{\phi}^\mathrm{\scriptscriptstyle R},\bar{\theta}^\mathrm{\scriptscriptstyle R}))\mathrm{diag}(\exp(j\angle\hat{\mathbf{h}}^\mathrm{RU\dagger})),\numberthis\label{eq:optimal_phases2}
\end{align*}
where $\angle\hat{\mathbf{h}}^\mathrm{\scriptscriptstyle RU}$ contains the phases of the elements in $\hat{\mathbf{h}}^\mathrm{\scriptscriptstyle RU}$, and
\begin{align*}
\hat{\nu} = \frac{\mathbf{a}^{\mathrm{\scriptscriptstyle B}\dagger}(\bar{\phi}^\mathrm{\scriptscriptstyle B},\bar{\theta}^\mathrm{\scriptscriptstyle B})\hat{\mathbf{h}}^\mathrm{\scriptscriptstyle BU}}{|\mathbf{a}^{\mathrm{\scriptscriptstyle B}\dagger}(\bar{\phi}^\mathrm{\scriptscriptstyle B},\bar{\theta}^\mathrm{\scriptscriptstyle B})\hat{\mathbf{h}}^\mathrm{\scriptscriptstyle BU}|}.
\end{align*}

Following re-estimation in Stage 2, $\hat{\mathbf{h}}^\mathrm{\scriptscriptstyle TOT}_{\tau}$ is used to perform MRC at the BS. The SNR for this system is given as
\begin{align}
\mathrm{SNR} = \frac{\rho^\mathrm{D}|\hat{\mathbf{h}}^{\mathrm{\scriptscriptstyle TOT}\dagger}_{\tau}\hat{\mathbf{h}}^{\mathrm{\scriptscriptstyle TOT}}_{\tau}|^2}{\rho^\mathrm{D}|\hat{\mathbf{h}}^{\mathrm{\scriptscriptstyle TOT}\dagger}_{\tau}(\hat{\mathbf{h}}^{\mathrm{\scriptscriptstyle TOT}}_{\tau}-\mathbf{h}^\mathrm{\scriptscriptstyle TOT})|^2 + \lVert\hat{\mathbf{h}}^\mathrm{\scriptscriptstyle TOT}_{\tau}\rVert^2},
\end{align}
where $\rho^\mathrm{D}$ is the uplink SNR during data transmission. The resulting SE, scaled by the proportion of data symbols, is
\begin{align}\label{eq:SE}
\mathrm{SE} = \frac{T-\tau}{T}\log_{2}(1 + \mathrm{SNR}).
\end{align}
%
We set $\rho^\mathrm{\scriptscriptstyle D}$ such that the median received SNR of the direct link, $\rho^{D}\beta^\mathrm{\scriptscriptstyle BU}$, is $0$ dB in the absence of shadowing.

Cluster central angles in azimuth are modelled as $\phi_c\sim\mathcal{CN}(0,\sigma_{\phi}^2)$ and in elevation  $\theta_c$ is Laplacian with a mean of $\pi/2$ and scale parameter $\sigma_{\theta}$. These settings are used for the two ray-based channels, $\mathbf{h}^\mathrm{\scriptscriptstyle RU}$ and $\mathbf{h}^\mathrm{\scriptscriptstyle BU}$. The mean cluster central angles for $\tilde{\mathbf{H}}^\mathrm{\scriptscriptstyle BR}$ are aligned with the corresponding LoS angles. Hence, $\phi_c^\mathrm{\scriptscriptstyle R}\sim\mathcal{CN}(\bar{\phi}^\mathrm{\scriptscriptstyle R},\sigma_{\phi}^2)$ and $\theta_c^\mathrm{\scriptscriptstyle R}$ is Laplacian with mean $\bar{\theta}^\mathrm{\scriptscriptstyle R}$ and scale parameter $\sigma_{\phi}$, and similarly for $\phi_c^\mathrm{\scriptscriptstyle B}$ and $\theta_c^\mathrm{\scriptscriptstyle B}$. All subray angular offsets are modelled using a Laplacian distribution with scale parameters $\sigma_{\Delta}$ in azimuth and $\sigma_{\delta}$ in elevation. These distributions are characterised using two sets of values, as shown in Table \ref{parameters}. Scenario 1 uses values from measurements at 2.53GHz \cite{sangodoyin_cluster_2018}, providing a narrow spread of ray angles, while Scenario 2 uses values from \cite{3GPP} for a wider angular spread. We use $C=3$ and $S=5$ for $\mathbf{h}^\mathrm{\scriptscriptstyle RU}$ and $\mathbf{h}^\mathrm{\scriptscriptstyle BU}$, and $C=2$ and $S=2$ for $\tilde{\mathbf{H}}^\mathrm{\scriptscriptstyle BR}$.

\begin{table}
	\renewcommand{\arraystretch}{1.5}
	\centering
	\caption{Model Parameters}
	\label{parameters}
	\begin{tabular}{|c|c|c|c|c|c| }
		\hline
		\textbf{Var.} & $\sigma_{\phi}$, $\sigma_{\Delta}$&$\sigma_{\theta}$, $\sigma_{\delta}$&$N_z$&$d^\mathrm{\scriptscriptstyle R}$&$K$\\
		\hline
		\textbf{Scen. 1}& $14.4^\mathrm{o}$, $1.9^\mathrm{o}$&$6.24^\mathrm{o}$, $1.37^\mathrm{o}$&$8$&$0.25$&$10^3$dB\\
		\hline
		\textbf{Scen. 2}&$31.64^\mathrm{o}$, $6.12^\mathrm{o}$&$24.25^\mathrm{o}$, $1.84^\mathrm{o}$&$16$&$0.5$&$12$dB\\
		\hline
	\end{tabular}
\end{table}

Figs.~\ref{fig:NMSE_vs_K_narrow} through~\ref{fig:NMSE_vs_dRIS_NLoS} examine the NMSE of the total channel estimate after Stage 1, $\hat{\mathbf{h}}_{\tau_1}^\mathrm{\scriptscriptstyle TOT}$ in \eqref{eq:htot_tau_1}. In order to explain these results it is useful to state the following two results from \cite{miller2021efficient}. Using the phase design in \eqref{eq:optimal_phases2}, Results 2 and 3 in \cite{miller2021efficient} provide simplified approximations of the NMSE for high and moderate values of $K$. Result 2 approximates the NMSE for $K\to\infty$ as
\begin{align*}
{\textrm{NMSE}}=\frac{\mathbb{E}[\lVert\mathbf{h}^\mathrm{\scriptscriptstyle TOT}-\hat{\mathbf{h}}_{\tau_1}^\mathrm{\scriptscriptstyle TOT}\rVert^2]}{\mathbb{E}\lVert\mathbf{h}^\mathrm{\scriptscriptstyle TOT}\rVert^2}\approx M\beta^\mathrm{\scriptscriptstyle BR}{A}_1,
\end{align*}
where
\begin{align}\label{A1}
{A}_1 = \frac{\mathbb{E}[\lvert\exp(j\angle\hat{\mathbf{h}}^{\mathrm{\scriptscriptstyle RU}\dagger})\boldsymbol{\epsilon}^\mathrm{\scriptscriptstyle RU}\rvert^2]}{\mathbb{E}[\lVert\mathbf{h}^\mathrm{\scriptscriptstyle TOT}\rVert^2]}
\end{align}
and $\boldsymbol{\epsilon}^\mathrm{\scriptscriptstyle RU} = \mathbf{h}^\mathrm{\scriptscriptstyle RU} - \hat{\mathbf{h}}^\mathrm{\scriptscriptstyle RU}$. For moderate $K$, Result 3 in \cite{miller2021efficient} gives
\begin{align*}
\frac{\mathbb{E}[|\mathbf{h}^\mathrm{\scriptscriptstyle TOT}-\hat{\mathbf{h}}_{\tau_1}^\mathrm{\scriptscriptstyle TOT}|^2]}{\mathbb{E}\lVert\mathbf{h}^\mathrm{\scriptscriptstyle TOT}\rVert^2}\approx\frac{\mathbb{E}[{A}_2]}{\mathbb{E}[\lVert\mathbf{h}^\mathrm{\scriptscriptstyle TOT}\rVert^2]},
\end{align*}
where
\begin{align*}
{A}_{2} &= M\sum\limits_{c=1}^{C}\sum\limits_{s=1}^{S}\beta^\mathrm{\scriptscriptstyle BR}_{c,s}\lvert\mathbf{a}^{\mathrm{\scriptscriptstyle R}\dagger}(\phi^\mathrm{\scriptscriptstyle R}_{c,s},\theta^\mathrm{\scriptscriptstyle R}_{c,s})\mathrm{diag}(\mathbf{a}^{\mathrm{\scriptscriptstyle R}}(\bar{\phi}^\mathrm{\scriptscriptstyle R},\bar{\theta}^\mathrm{\scriptscriptstyle R}))|\hat{\mathbf{h}}^\mathrm{\scriptscriptstyle RU}|\rvert^2,\numberthis\label{eq:A2}
\end{align*}
and $|\hat{\mathbf{h}}^\mathrm{\scriptscriptstyle RU}|$ denotes the vector of absolute values.

Figs.~\ref{fig:NMSE_vs_K_narrow} and~\ref{fig:NMSE_vs_K_wide} plot the NMSE of $\hat{\mathbf{h}}_{\tau_1}^\mathrm{\scriptscriptstyle TOT}$ with $N_z=8$ and $d^\mathrm{\scriptscriptstyle R} = 0.5$ for Scenarios 1 and 2, respectively, with random, DFT and MDFT training. Here, we neglect shadowing and only plot results for the random training phases in two cases. This is to avoid the very high variability caused by both random training phases and shadowing which necessitates extremely long simulation times. 

\begin{figure}[ht]
	\centering
	\includegraphics[trim=1cm 0.04cm 1.6cm 0.1cm, clip=true, width=1\columnwidth]{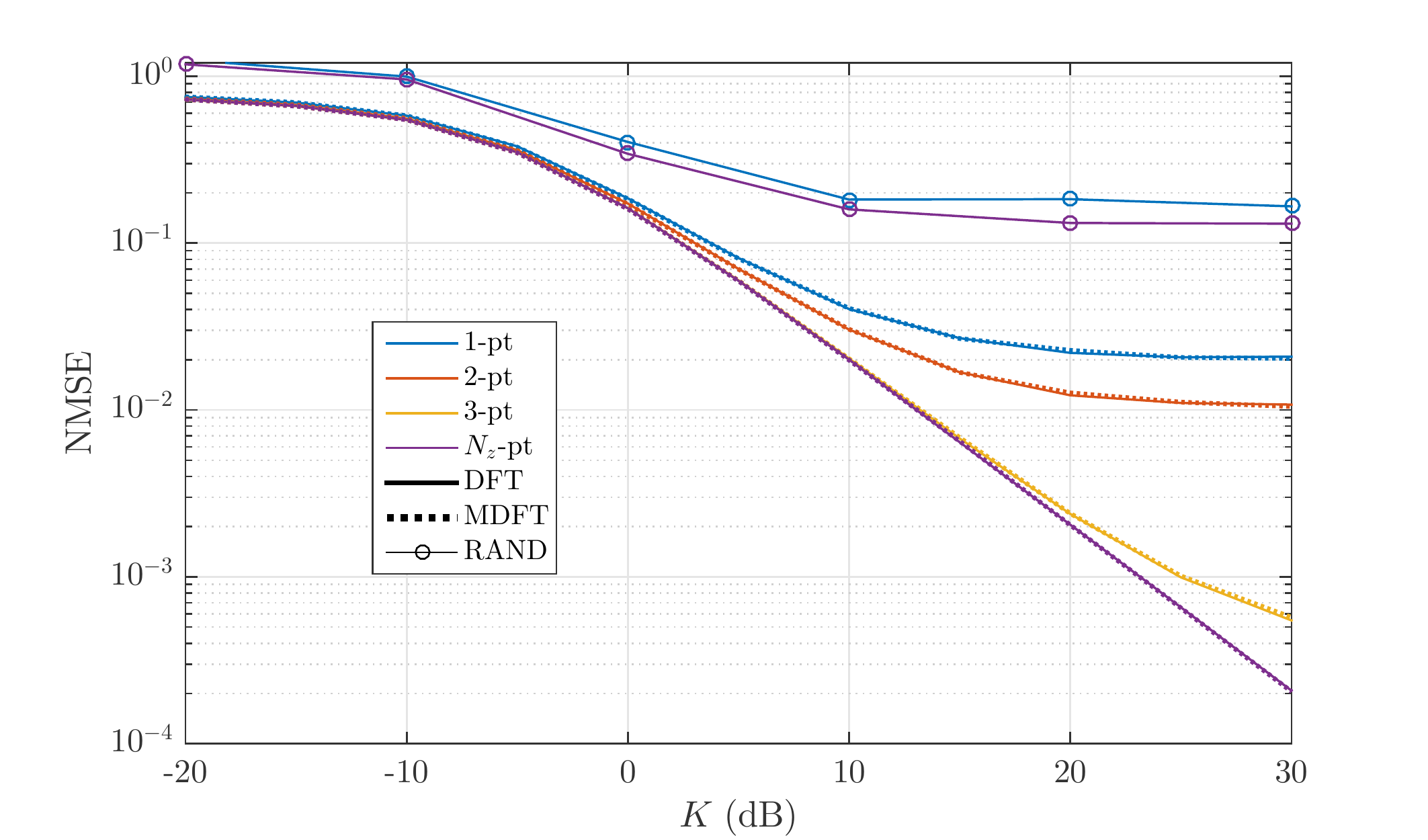}
	\caption{NMSE of $\hat{\mathbf{h}}_{\tau_1}^\mathrm{\scriptscriptstyle TOT}$ vs $K$ with narrow angular spread (Scenario 1 parameters are used except for the K-factor value which is varied).}
	\label{fig:NMSE_vs_K_narrow}
\end{figure}

\begin{figure}[ht]
	\centering
	\includegraphics[trim=1cm 0.04cm 1.6cm 0.1cm, clip=true, width=1\columnwidth]{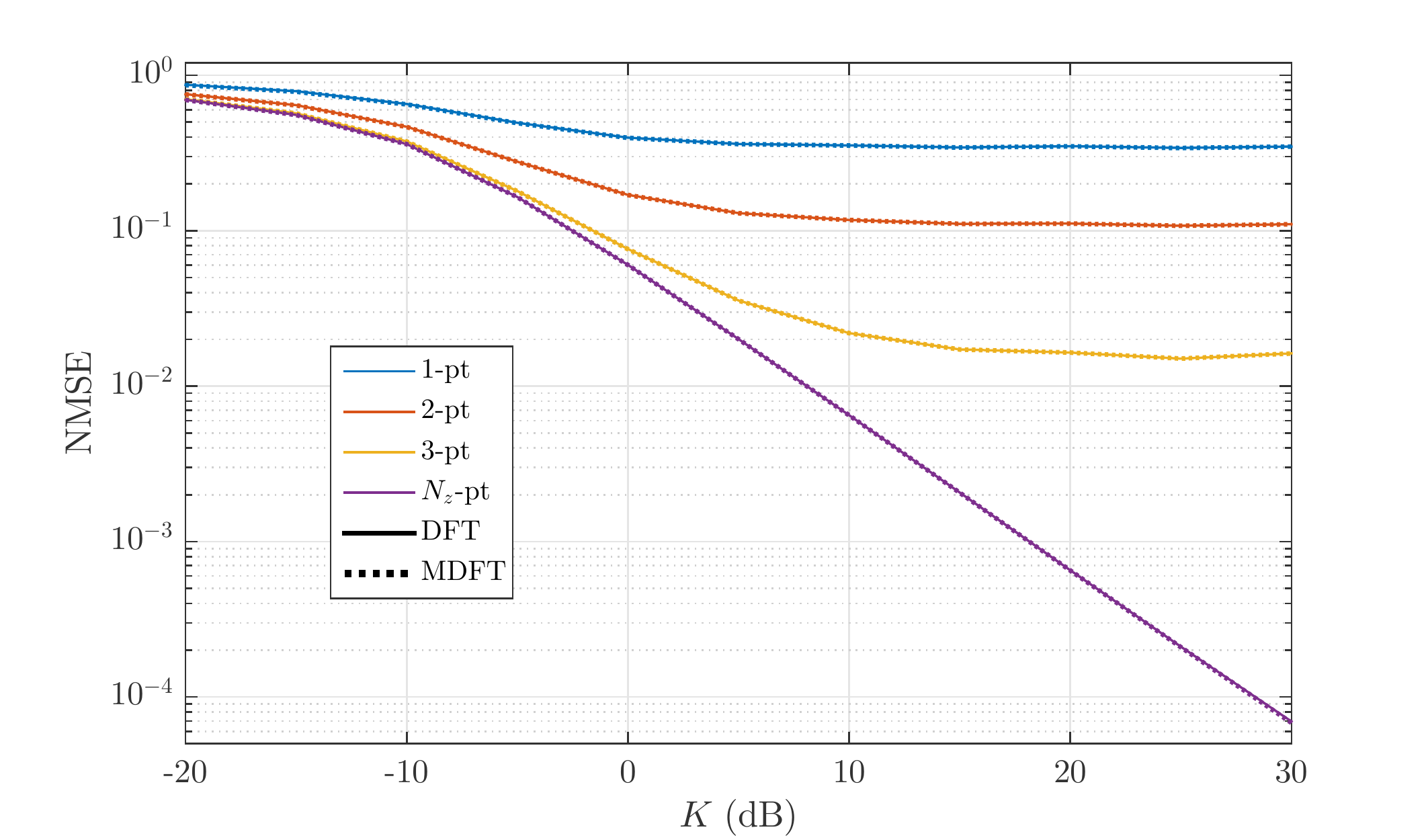}
	\caption{NMSE of $\hat{\mathbf{h}}_{\tau_1}^\mathrm{\scriptscriptstyle TOT}$ vs $K$ with wide angular spread (Scenario 2 parameters are used except for the K-factor value which is varied).}
	\label{fig:NMSE_vs_K_wide}
\end{figure}
Comparing the $N_z$-pt results in Figs.~\ref{fig:NMSE_vs_K_narrow} and~\ref{fig:NMSE_vs_K_wide}, which reflect the accuracy of channel estimation without the impact of interpolation, we observe a higher NMSE for a narrow angular spread than for wide angular spread. A narrower spread of ray angles in $\tilde{\mathbf{H}}^\mathrm{\scriptscriptstyle BR}$ enhances the alignment in \eqref{eq:A2} between the RIS LoS vector
$\mathbf{a}^\mathrm{\scriptscriptstyle R}(\bar{\phi}^\mathrm{\scriptscriptstyle R},\bar{\theta}^\mathrm{\scriptscriptstyle R})$  and the scattered rays $\mathbf{a}^\mathrm{\scriptscriptstyle R}(\phi^\mathrm{\scriptscriptstyle R}_{c,s},\theta^\mathrm{\scriptscriptstyle R}_{c,s})$ in $\tilde{\mathbf{H}}^\mathrm{\scriptscriptstyle BR}$, which increases the inner product with $|\hat{\mathbf{h}}^\mathrm{\scriptscriptstyle RU}|$, 
thus inflating ${A}_2$. %
We also note that NMSE increases with the level of interpolation and decreases with $K$. 
Furthermore, narrower angular spreads aid interpolation by increasing the correlation between channel coefficients in $\mathbf{h}^\mathrm{\scriptscriptstyle RU}$, lowering the NMSE compared to that with wider angular spread. 

Finally, we note that the NMSE shows no perceptible improvement with the MDFT training phase design as compared to the DFT training phases, as the error from unknown scattering in the BS-RIS channel is sufficiently large to obscure the differences between reasonable training phase designs for this range of $K$. When $K$ is very small, even the random approach has a similar NMSE as here the error from unknown scattering is very dominant.

\begin{figure}[ht]
	\centering
	\includegraphics[trim=1cm 0.04cm 1.6cm 0.1cm, clip=true, width=1\columnwidth]{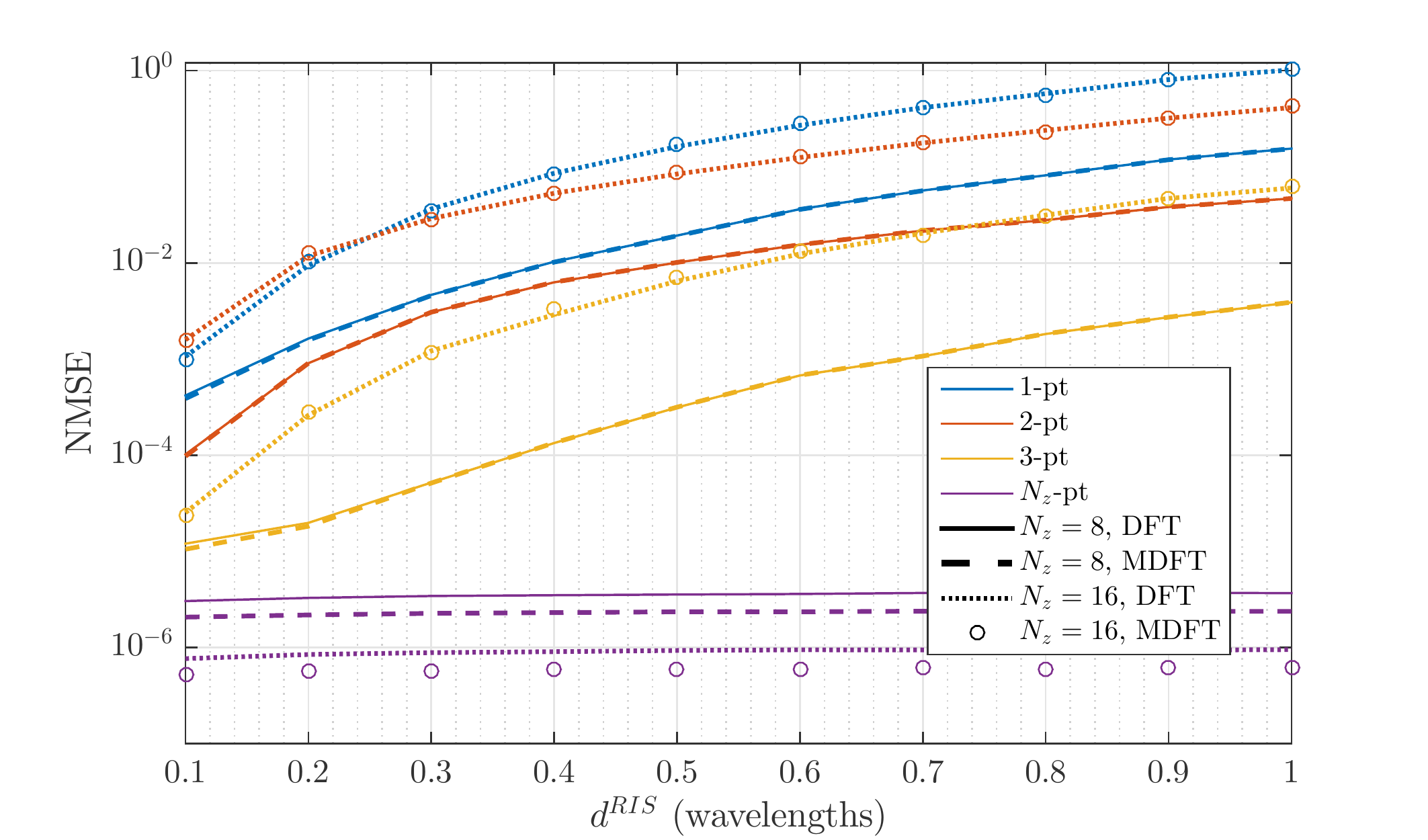}
	\caption{NMSE of $\hat{\mathbf{h}}_{\tau_1}^\mathrm{\scriptscriptstyle TOT}$ vs $d^\mathrm{\scriptscriptstyle RIS}$ under pure LoS conditions.}
	\label{fig:NMSE_vs_dRIS_LoS}
\end{figure}

\begin{figure}[ht]
	\centering
	\includegraphics[trim=1cm 0.04cm 1.6cm 0.1cm, clip=true, width=1\columnwidth]{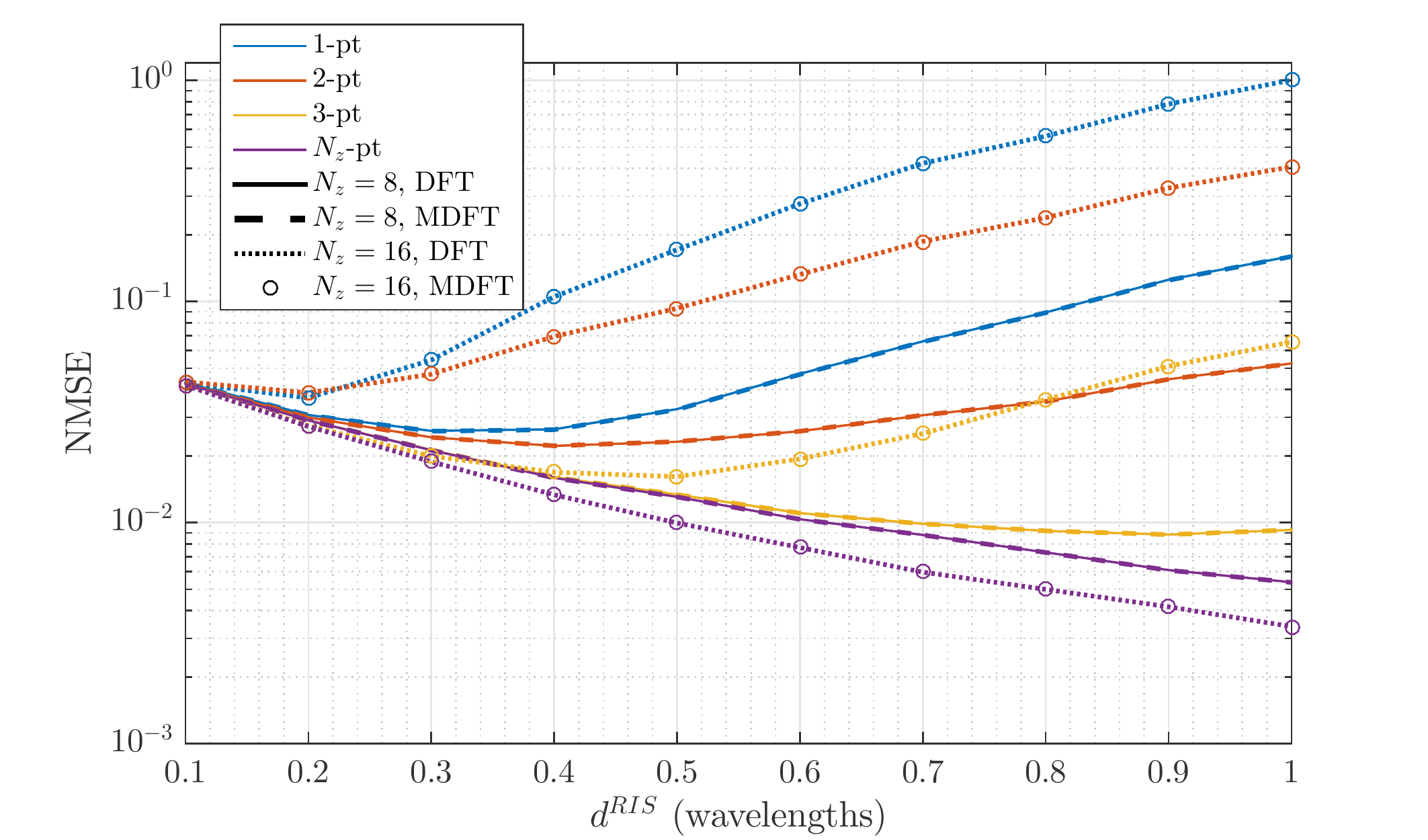}
	\caption{NMSE of $\hat{\mathbf{h}}_{\tau_1}^\mathrm{\scriptscriptstyle TOT}$ vs $d^\mathrm{\scriptscriptstyle RIS}$ with $K=12$dB.}
	\label{fig:NMSE_vs_dRIS_NLoS}
\end{figure}

Fig.~\ref{fig:NMSE_vs_dRIS_LoS} examines the effects of  RIS inter-element spacing and column size when $\mathbf{H}^\mathrm{\scriptscriptstyle BR}$ is pure LoS ($K\to\infty$). For all interpolating methods (1-pt, 2-pt and 3-pt) larger inter-element spacing and larger $N_z$ reduce the correlation in $\mathbf{h}^\mathrm{\scriptscriptstyle RU}$ \cite{antenna_config_paper} hindering interpolation and increasing the NMSE. Similarly, increasing levels of interpolation (from 3-pt to 1-pt) also increases the NMSE. These trends are also apparent in Fig.~\ref{fig:NMSE_vs_dRIS_NLoS} which considers a NLoS $\mathbf{H}^\mathrm{\scriptscriptstyle BR}$  channel with $K=12$dB. Note that without interpolation (the $N_z$-pt method) the NMSE is unaffected by the spacing as every channel element is being separately estimated. Mathematically, this can be seen in \eqref{A1} where ${A}_1$ is independent of spacing. The MDFT technique is seen to have better NMSE than DFT for the $N_z$-pt method, but this improvement is indiscernible for the interpolating methods as the interpolation errors obscure the MDFT improvements. Another trend which is different for the $N_z$-pt method is that larger $N_z$ reduces NMSE. This can be deduced from \eqref{A1} where the numerator of ${A}_1$ contains a weighted average of the error $\boldsymbol{\epsilon}^\mathrm{\scriptscriptstyle RU}$ and converges to zero as the number of RIS elements increases.

Fig.~\ref{fig:NMSE_vs_dRIS_NLoS} repeats the results in Fig.~\ref{fig:NMSE_vs_dRIS_LoS} for a NLoS $\mathbf{H}^\mathrm{\scriptscriptstyle BR}$  channel with $K=12$dB. Here, the benefits of MDFT are obscured by the errors introduced by the neglected scattering, even for the $N_z$-pt method. The other trend which is different in NLoS conditions is the relationship between NMSE and spacing. In Fig.~\ref{fig:NMSE_vs_dRIS_NLoS} we see that the NMSE of the $N_z$-pt method drops as spacing is increased (whereas it was constant in LoS). This is due to the  $A_2$ term in \eqref{eq:A2}. The inner product in $A_2$ decreases rapidly with increased inter-element spacing as this reduces the alignment between the LoS and scattered ray vectors, decreasing the NMSE for the $N_z$-pt method. For the interpolating methods, the NMSE initially drops and then increases with spacing. The initial drop is caused by the reduction in alignment between the LoS and scattered ray vectors in $A_2$ and the eventual rise is due to interpolation over less correlated channels.

Figs.~\ref{fig:NMSE_vs_dRIS_LoS}  and~\ref{fig:NMSE_vs_dRIS_NLoS} look at the size of channel estimation error. This is important, but the key question is how these errors influence the setting of the RIS phases. Hence, in Figs.~\ref{fig:phase_error_vs_SNRt_bestcase} and~\ref{fig:phase_error_vs_SNRt_worstcase} we examine the error in the RIS transmission phases calculated using the estimated channels, as compared to the ideal phase design obtained with perfect CSI. We plot the MS phase error, given by $\frac{1}{N}\sum_{i=1}^N(|\angle\bar{\boldsymbol{\Phi}}_{ii} - \angle\bar{\boldsymbol{\Phi}}_{ii}^{*}|^2)$,
where $\bar{\boldsymbol{\Phi}}$ is given in \eqref{eq:optimal_phases2}, $\bar{\boldsymbol{\Phi}}^{*}$ is calculated using \eqref{eq:optimal_phases}, and $\mathbf{a}^\mathrm{\scriptscriptstyle B}(\bar{\phi}^\mathrm{\scriptscriptstyle B},\bar{\theta}^\mathrm{\scriptscriptstyle B})$ and $\mathbf{a}^\mathrm{\scriptscriptstyle R}(\bar{\phi}^\mathrm{\scriptscriptstyle R},\bar{\theta}^\mathrm{\scriptscriptstyle R})$ are replaced with the left and right singular vectors of $\mathbf{H}^\mathrm{\scriptscriptstyle BR}$, respectively.
	
	Fig.~\ref{fig:phase_error_vs_SNRt_bestcase} plots the MS phase error vs the training SNR, $\rho$, for the parameters given in Scenario 1. In Scenario 1, the elements in the UE-RIS channel are highly correlated and the RIS-BS is near-perfectly estimated by the LoS component, providing a "best-case scenario" for the interpolating methods. As expected, the MS error drops with SNR and as finer interpolation is used. Fig.~\ref{fig:phase_error_vs_SNRt_bestcase} clearly demonstrates the benefits of MDFT which yields the most accurate RIS transmission phases with the greatest resilience to the training SNR. Substantial improvements over DFT and random are seen at low SNR. These differences are clear in Fig.~\ref{fig:phase_error_vs_SNRt_bestcase} but were small or indicernible in
	Figs.~\ref{fig:NMSE_vs_dRIS_LoS}  and~\ref{fig:NMSE_vs_dRIS_NLoS}. This is just a result of the metrics used. The maximum phase error resulting from an estimate of a channel element $\mathbf{h}_n$ with error $\boldsymbol{\epsilon}_n$ can be approximated by $\tan^{-1}(|\mathbf{\epsilon}_n|/|\mathbf{h}_n|)\approx|\mathbf{\epsilon}_n|/|\mathbf{h}_n|$ as often $|\mathbf{\epsilon}_n|/|\mathbf{h}_n|\ll 1$. Meanwhile, the NMSE is dependent on $(|\mathbf{\epsilon}_n|/|\mathbf{h}_n|)^2$, resulting in smaller values which obscure the differences between training phase designs.

	Fig.~\ref{fig:phase_error_vs_SNRt_worstcase} shows the same results for Scenario 2 which provides a more challenging situation for interpolation, with moderate scattering in the RIS-BS channel and parameters which decorrelate the elements of $\mathbf{h}^\mathrm{\scriptscriptstyle RU}$. All trends follow those in Fig.~\ref{fig:phase_error_vs_SNRt_bestcase}, however the MS error values are greatly increased. 	Fig.~\ref{fig:phase_error_vs_SNRt_worstcase} clearly demonstrates that, at reasonable values of $\rho$, the estimation accuracy is far more dependent on the channel conditions, RIS size, and level of interpolation used than on the training phase design. For example, in Scenario 1 at $\rho=5$dB, even random training matrices with the coarsest interpolation result in an RMS phase error under 1 radian. However, in Scenario 2, any interpolation coarser than the 3-pt method results in RMS phase errors approaching 2 radians even with the MDFT training phases and a very high training SNR.

\begin{figure}[ht]
	\centering
	\includegraphics[trim=1cm 0.04cm 1.6cm 0.1cm, clip=true, width=1\columnwidth]{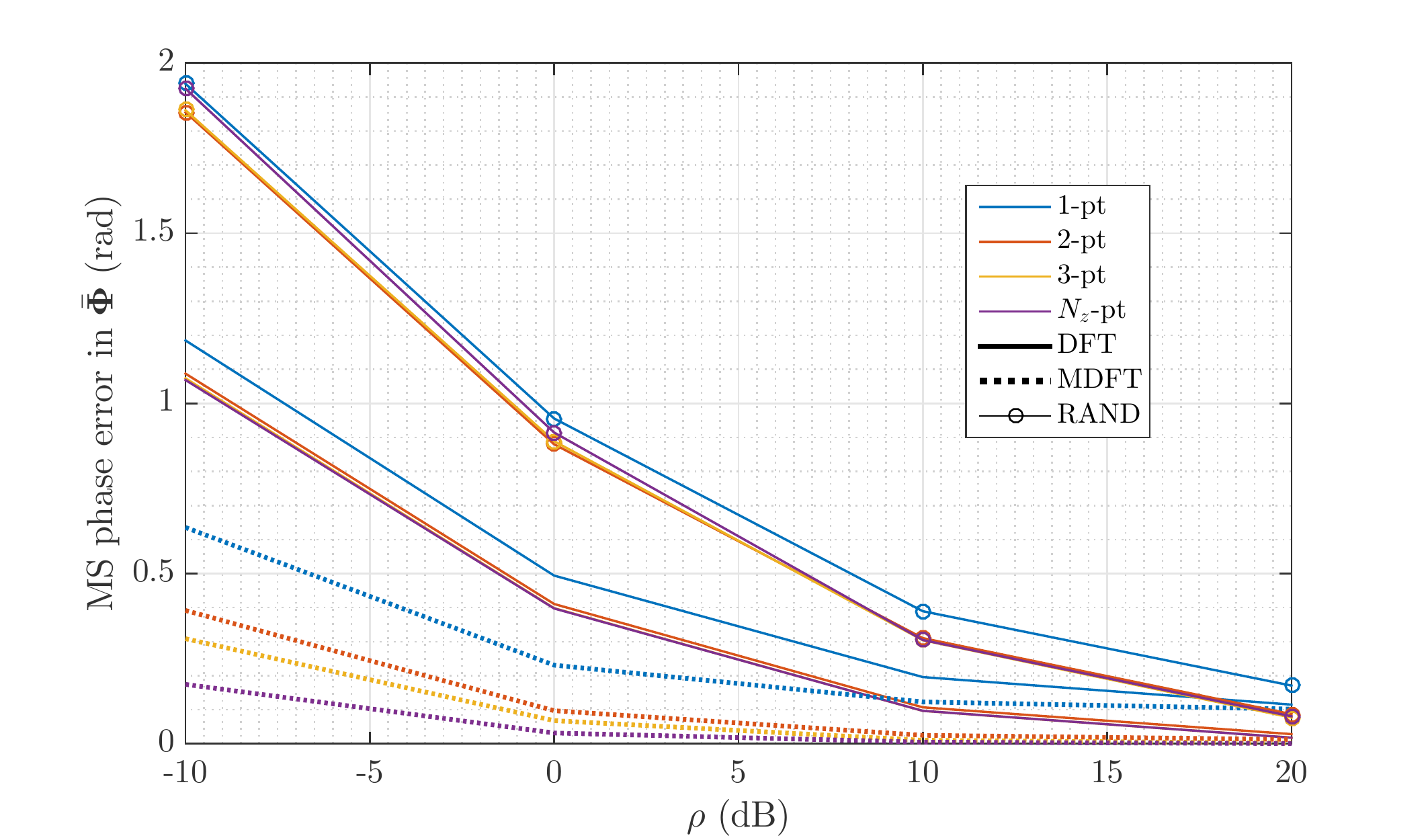}
	\caption{MS phase error in $\bar{\boldsymbol{\Phi}}$ vs uplink training SNR (Scenario 1).}
	\label{fig:phase_error_vs_SNRt_bestcase}
\end{figure}

\begin{figure}[ht]
	\centering
	\includegraphics[trim=1cm 0.04cm 1.6cm 0.1cm, clip=true, width=1\columnwidth]{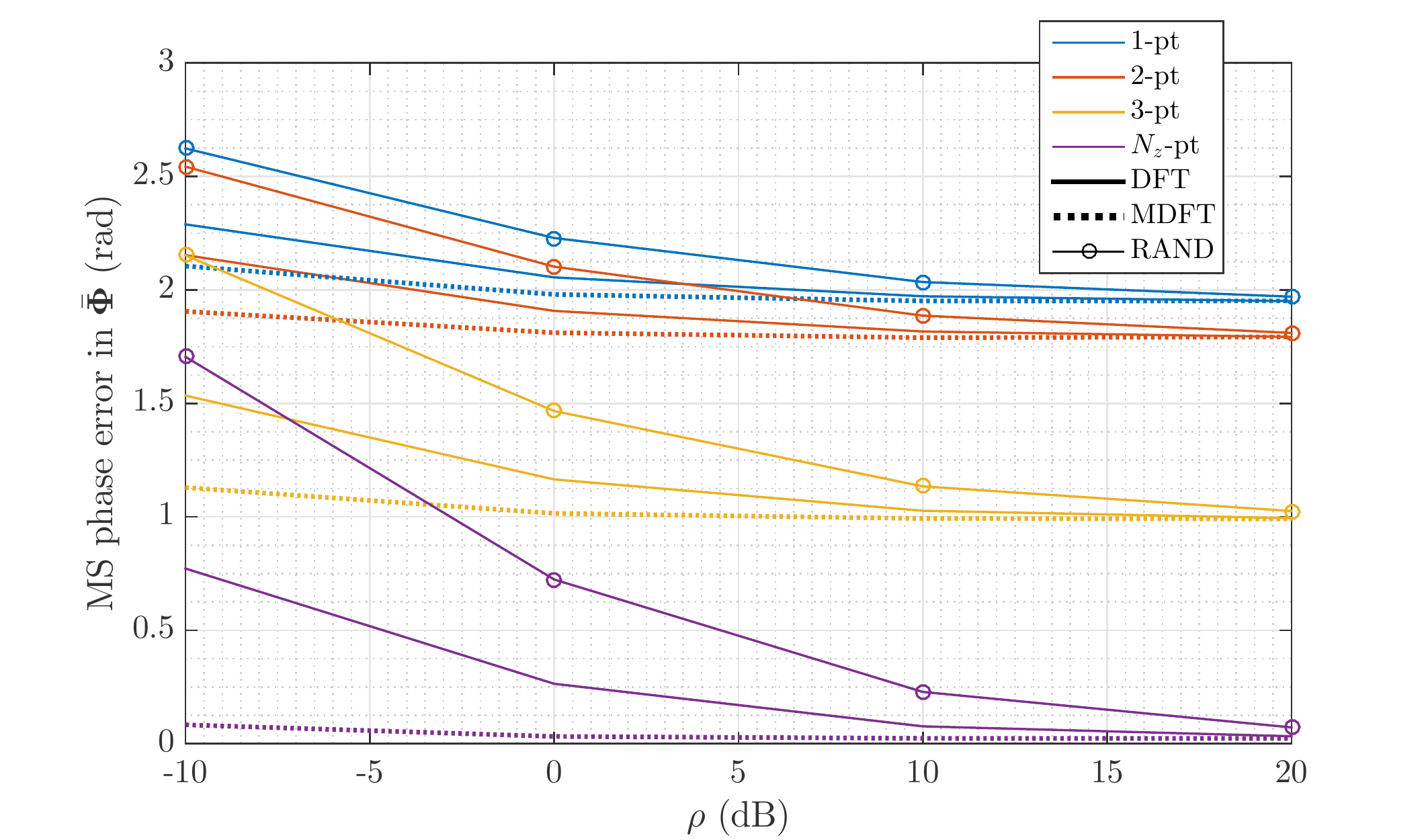}
	\caption{MS phase error in $\bar{\boldsymbol{\Phi}}$ vs uplink training SNR (Scenario 2).}
	\label{fig:phase_error_vs_SNRt_worstcase}
\end{figure}

To predict how the errors in Figs.~\ref{fig:phase_error_vs_SNRt_bestcase} and~\ref{fig:phase_error_vs_SNRt_worstcase} impact overall performance, we now examine the effects of the MS phase error in the RIS transmission phases on the resulting SE using the simplified analysis in Sec.~\ref{simplified}. To do so, in Fig.~\ref{fig:SE_vs_phase_error}, we plot the single-user SE with artificial Gaussian phase errors whose standard deviation is the RMS phase error (the square root of the MS values shown in Figs.~\ref{fig:phase_error_vs_SNRt_bestcase} and~\ref{fig:phase_error_vs_SNRt_worstcase}).
Fig.~\ref{fig:SE_vs_phase_error} plots the SE with perfect CSI against the synthesized RMS phase error for Scenarios 1 and 2, including the case for a blocked UE-BS channel, and the approximated lower bound on the performance calculated using \eqref{eq:SE_drop}. We note that the very simple SE loss predicted by \eqref{eq:SE_drop} is very accurate up to an RMS phase error of 1.5 radians.
\begin{figure}[ht]
	\centering
	\includegraphics[trim=1cm 0.04cm 1.6cm 0.1cm, clip=true, width=1\columnwidth]{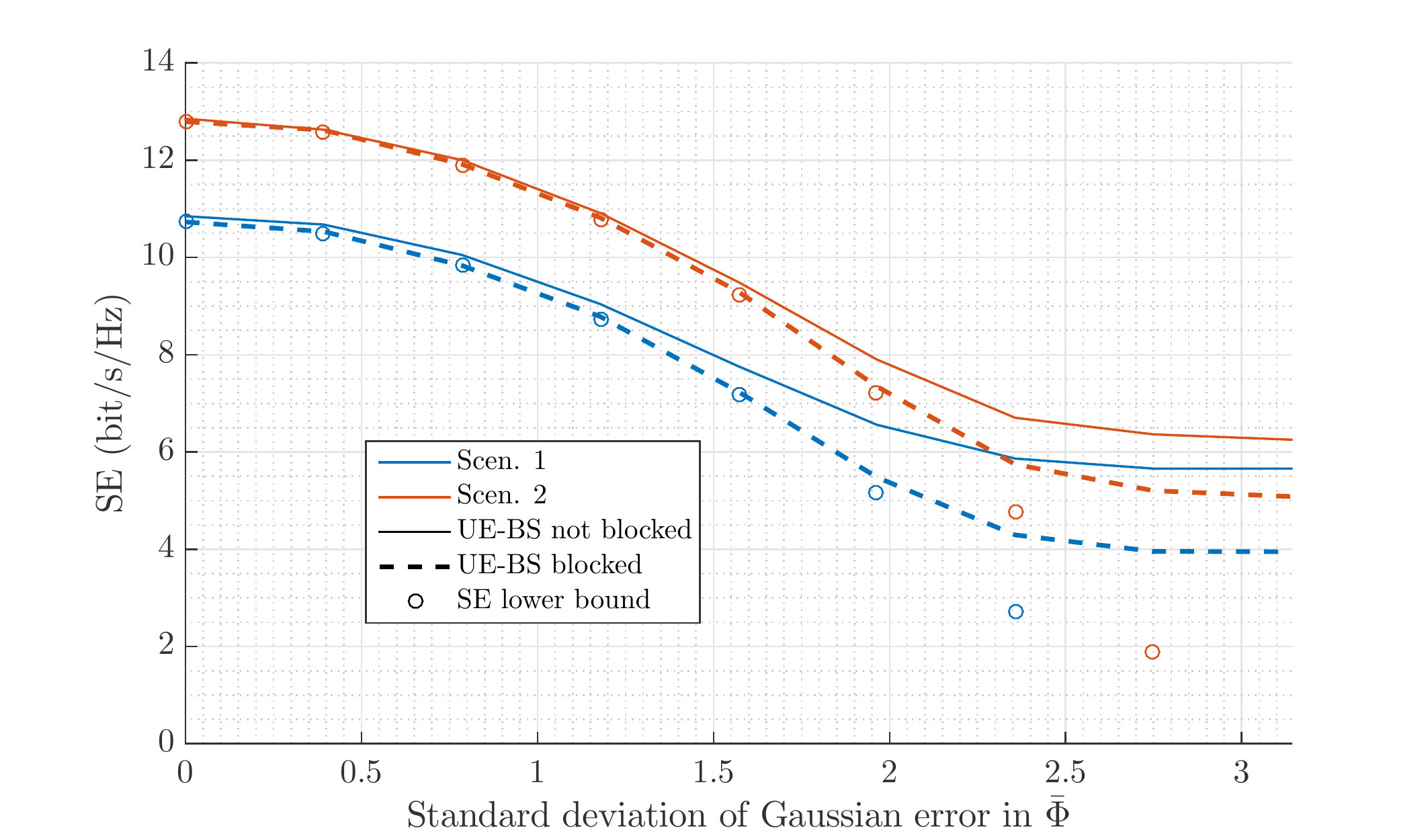}
	\caption{SE vs standard deviation of Gaussian error in $\bar{\boldsymbol{\Phi}}$, using single-user RIS phase design from \cite{Singh2020Optimal} and MRC processing.}
	\label{fig:SE_vs_phase_error}
\end{figure}
Note that for both scenarios the SE loss for an RMS error of 1 radian or 57$^\circ$ in $\bar{\boldsymbol{\Phi}}$ is around 13\%. Hence, the simulations and analysis show remarkable resilience to loss of accuracy of the phases despite the apparent wide variation in the RMS phase errors in Figs.~\ref{fig:phase_error_vs_SNRt_bestcase} and~\ref{fig:phase_error_vs_SNRt_worstcase}.   From Fig.~\ref{fig:phase_error_vs_SNRt_bestcase}, we see that RMS phases less than one are achieved by all methods when $\rho>0$dB. In contrast, for Scenario 2, Fig.~\ref{fig:phase_error_vs_SNRt_worstcase} shows that the use of MDFT and at least three elements per 16-column RIS are required to achieve an RMS value around 1 radian.

Finally, in Figs.~\ref{fig:SE_vs_tau2_Nz8} and~\ref{fig:SE_vs_tau2_Nz16} we examine the SE with MRC processing following re-estimation of $\hat{\mathbf{h}}_{\tau_1}^\mathrm{\scriptscriptstyle TOT}$ with $\tau_2$ additional symbols in Stage 2. We take into account the portion of usable frame symbols remaining for data transmission following the training procedure, using \eqref{eq:SE} with a frame length of $T=400$ symbols. Fig.~\ref{fig:SE_vs_tau2_Nz8} plots the SE for Scenarios 1 and 2 with $N_z=8$, while Fig.~\ref{fig:SE_vs_tau2_Nz16} shows the same for $N_z=16$. All results clearly show that $\tau_2=1$ is sufficient for the re-estimation stage.
\begin{figure}[ht]
	\centering
	\includegraphics[trim=1cm 0.04cm 1.6cm 0.1cm, clip=true, width=1\columnwidth]{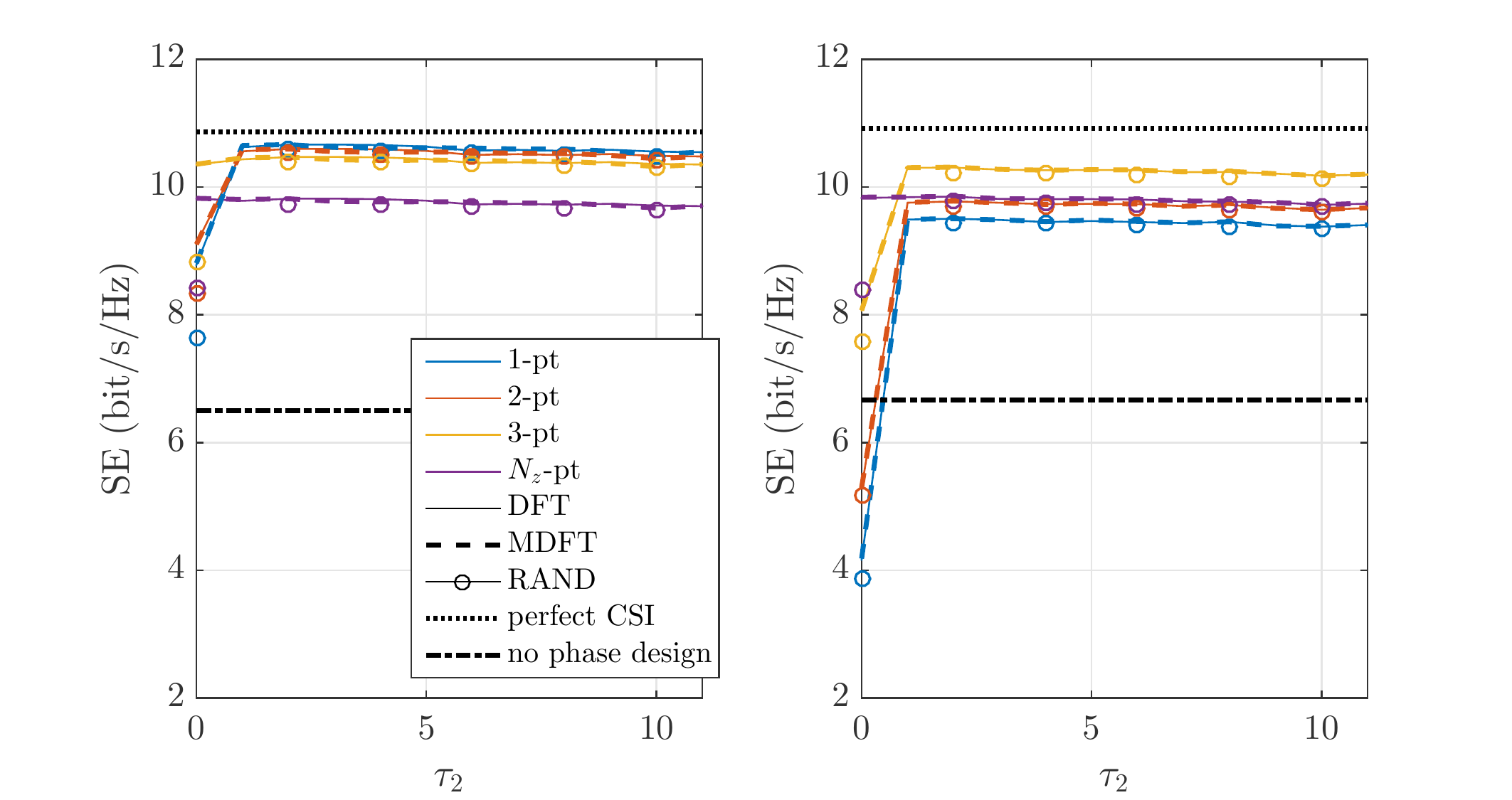}
	\caption{UL SE vs $\tau_2$, using single-user RIS phase design from \cite{Singh2020Optimal} and MRC processing, for Scenario 1 (left) and Scenario 2 (right) with $N_z=8$.}
	\label{fig:SE_vs_tau2_Nz8}
\end{figure}
As previously deduced, the the interpolating methods perform extremely well under Scenario 1. Particularly when re-estimating $\hat{\mathbf{h}}^\mathrm{\scriptscriptstyle}$ in Stage 2, even the coarsest interpolation method with any training phase design gives near-optimal performance for $N_z=8$ and $N_z=16$. Here, the rate loss due interpolation is less than the gain due to the use of extra data symbols. Hence, when the channels are highly correlated, we can afford to greatly reduce the number of training symbols required, increasing the usable data symbols in the frame.
\begin{figure}[ht]
	\centering
	\includegraphics[trim=1cm 0.04cm 1.6cm 0.1cm, clip=true, width=1\columnwidth]{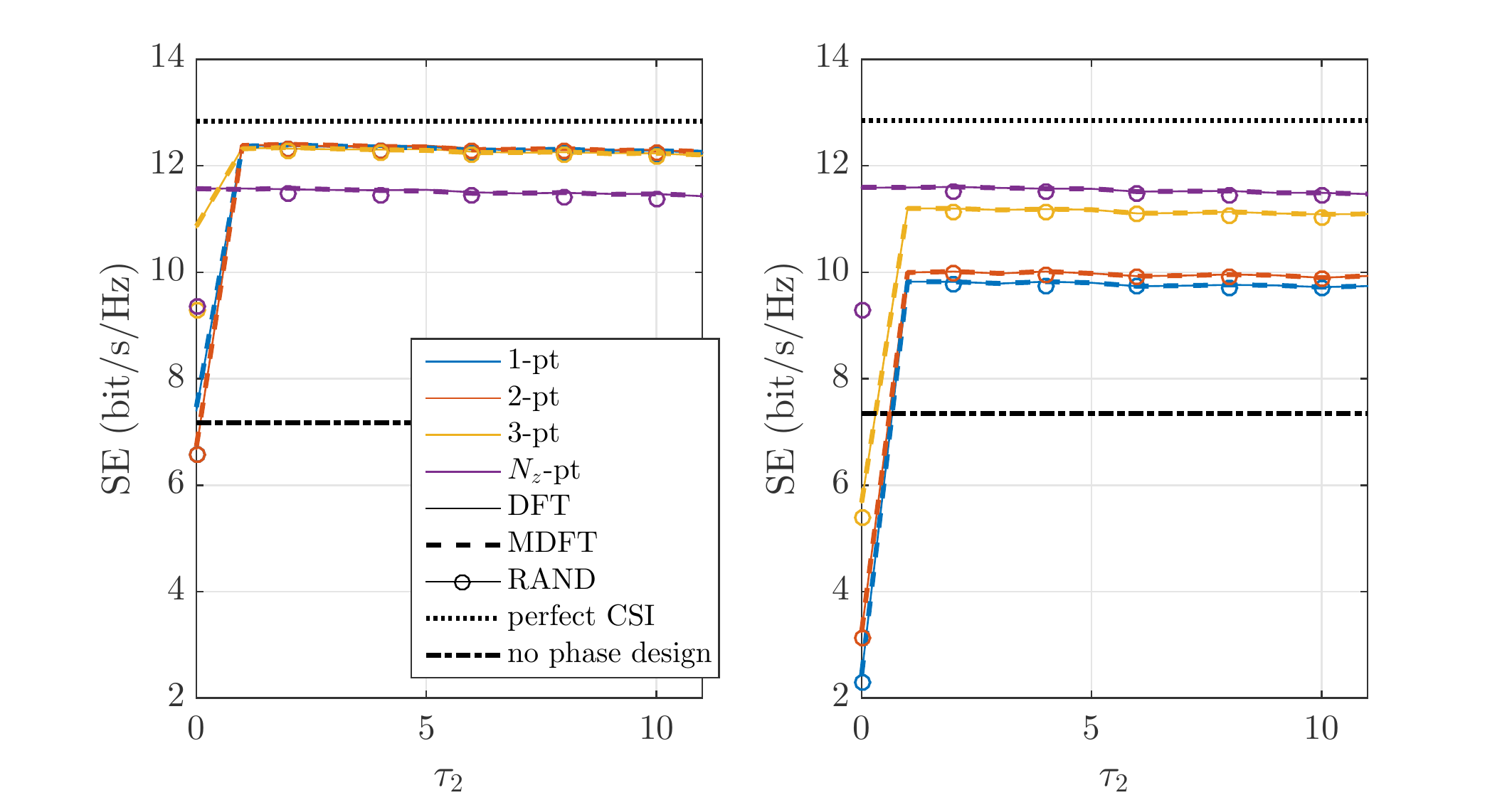}
	\caption{UL SE vs $\tau_2$, using single-user RIS phase design from \cite{Singh2020Optimal} and MRC processing, for scenario 1 (left) and scenario 2 (right) with $N_z=16$.}
	\label{fig:SE_vs_tau2_Nz16}
\end{figure}
Under Scenario 2, the coarser interpolation methods suffer as they cause sufficient errors in the RIS transmission phases to harm performance despite the greater usable frame size they afford. This is seen in  Fig.~\ref{fig:SE_vs_tau2_Nz8}  where the 3-pt method  provides better performance than the $N_z$-pt method when $N_z=8$. In Fig.~\ref{fig:SE_vs_tau2_Nz16},  $N_z=16$ and interpolation is more challenging. Here, all interpolation methods corrupt the RIS transmission phases significantly to counteract the benefits from the increased usable frame symbols afforded by these methods. Here, 3-pt or $N_z$-pt interpolation is beneficial. Also shown in Figs.~\ref{fig:SE_vs_tau2_Nz8} and~\ref{fig:SE_vs_tau2_Nz16}  is the curve labeled {\textit{no phase design}}, which corresponds to no channel estimation ($\tau_1=\tau_2=0$) and the use of random RIS phases during transmission. This benchmark result shows that a large percentage of the possible improvement in SE is achieved by a mixture of interpolation and MDFT which greatly accelerates channel estimation.

Clearly, the estimation protocol is also extremely robust to the training phase design, giving similar average performance for any training phase matrix, provided the total channel is re-estimated in Stage 2 with only one additional training symbol. This is explained by Figs.~\ref{fig:phase_error_vs_SNRt_bestcase}-\ref{fig:SE_vs_phase_error} where the MS phase error variation across techniques at $\rho=5$dB is too small to have a significant SE effect. Note that with these parameter settings, although the random training phase design performs well on average, it can deliver much lower SE values than DFT or MDFT.  Hence, a fixed training matrix is much more reliable and, as the MDFT yields equal or better performance to the DFT for all metrics considered, it is logical to use the MDFT design.

\section{Conclusion}
We provide an exact error analysis of the two-stage channel estimation procedure in \cite{miller2021efficient} leading to an optimal training phase design based on the MDFT matrix. In conjunction with substantial channel interpolation, this technique is shown to accelerate channel estimation considerably when the RIS-BS channel is near LoS, solving the problem that prior techniques were only efficient for full-rank RIS-BS channels. SE results from the new design are shown to be surprisingly resilient to the training SNR, and this property is explained by a simplified SNR analysis. Results reveal that, while the MDFT design is the most reliable and robust, the accuracy of estimation is more dependent on the channel correlation and RIS size than the training phase design. For less correlated channels with a large RIS, the interpolating methods  degrade the accuracy in the RIS transmission phases sufficiently to nullify any potential benefit from the increased portion of usable frame symbols. However, the overall channel estimation protocol provides extremely robust performance for all conditions considered, provided one additional symbol is used for re-estimation in Stage 2.
\begin{appendices}
\section{Proof of Theorem 1}\label{App}
From \eqref{eq:X11_general} and \eqref{eq:X22_general} the traces of $\mathbf{X}_{11}$ and $\mathbf{X}_{22}$ are given by
\begin{align*}
\mathrm{tr}(\mathbf{X}_{11}) &=(M\bar{\beta}^\mathrm{\scriptscriptstyle BR})^{-1}\big[\mathrm{tr}\big(\boldsymbol{\Omega}^{-1}\big)
+\alpha^{-1}\boldsymbol{\omega}^\dagger\boldsymbol{\Omega}^{-2}\boldsymbol{\omega}\big]
\end{align*}
and
\begin{align*}
\mathrm{tr}(\mathbf{X}_{22}) &= (\tau_1)^{-1}\big[M
+\alpha^{-1}\boldsymbol{\omega}^\dagger\boldsymbol{\Omega}^{-1}\boldsymbol{\omega}\big].
\end{align*}
Substituting \eqref{eq:ABD} into \eqref{eq:X_parts_general}, we see that $\mathbf{X}_{11}$ involves the term $[\boldsymbol{\Omega} - \tau_1^{-1}\boldsymbol{\omega}\boldsymbol{\omega}]^{-1}$. Noting that $\mathbf{X}_{11}$ must be positive definite, we require that $|\boldsymbol{\Omega}|(1 - \tau_1^{-1}\boldsymbol{\omega}\boldsymbol{\Omega}^{-1}\boldsymbol{\omega})\geq0$. For $\boldsymbol{\Omega}$ to be invertible, we must have $|\boldsymbol{\Omega}|\geq 0$, therefore $\alpha\geq0$.
Furthermore, $\boldsymbol{\omega}^\dagger\boldsymbol{\Omega}^{-2}\boldsymbol{\omega}$ is quadratic in form, hence the diagonal elements are also positive. These observations show that all the terms in $\mathrm{tr}(\mathbf{X}_{11})$ and $\mathrm{tr}(\mathbf{X}_{22})$ are positive. Hence,  minimizing the trace of $\mathbf{X}$ requires training phases which yield $\boldsymbol{\omega} = \mathbf{0}^{N\times1}$ and minimize $\mathrm{tr}(\boldsymbol{\Omega}^{-1})$. The former condition would also minimize $\mathrm{tr}(\mathbf{Y}_{22}\mathbf{Y}_{22}^\dagger)$ using \eqref{eq:Y22_3}. 

Since, the MDFT approach satisfies $\boldsymbol{\omega}^{\textrm{MD}} = \mathbf{0}^{N\times1}$, it only remains to show that it also minimizes $\mathrm{tr}(\boldsymbol{\Omega}^{-1})$. Let $\lambda_1, \ldots , \lambda_{N}$ be the eigenvalues of an $N \times N$ matrix, $\mathbf{P}$. Then, the Cauchy-Schwartz inequality gives:
\begin{equation}\label{CS}
N=\sum_{i=1}^N \sqrt{\lambda_i}\sqrt{\lambda_i^{-1}} \leq \sqrt{\sum_{i=1}^N \lambda_i \sum_{i=1}^N \lambda_i^{-1}}.
\end{equation}
From \eqref{CS}, we obtain the trace inequality $\mathrm{tr}(\mathbf{P}^{-1}) \geq N^2/\mathrm{tr}(\mathbf{P})$. Now, for all training phase matrices we have:
\begin{equation}\label{trace_eq}
\mathrm{tr}(\boldsymbol{\Omega})=\mathrm{tr}\left(\sum_{t=1}^{\tau_1} \boldsymbol{\psi}_t^*\boldsymbol{\psi}_t^T\right)=\sum_{t=1}^{\tau_1}\mathrm{tr}\left( \boldsymbol{\psi}_t^T\boldsymbol{\psi}_t^*\right)=\tau_1 N.
\end{equation}
Substituting \eqref{trace_eq} into the trace inequality gives the lower bound  $\mathrm{tr}(\boldsymbol{\Omega}^{-1}) \geq N/\tau_1$. Since $\boldsymbol{\Omega}^{\textrm{MD}}$ satisfies the lower bound,  $\mathrm{tr}(\boldsymbol{\Omega}^{-1})= N/\tau_1$, the proof is complete.

\end{appendices}
	\bibliographystyle{IEEEtran}
	\bibliography{RIS_CE_mod_dft}

\begin{thebibliography}{10}
\providecommand{\url}[1]{#1}
\csname url@samestyle\endcsname
\providecommand{\newblock}{\relax}
\providecommand{\bibinfo}[2]{#2}
\providecommand{\BIBentrySTDinterwordspacing}{\spaceskip=0pt\relax}
\providecommand{\BIBentryALTinterwordstretchfactor}{4}
\providecommand{\BIBentryALTinterwordspacing}{\spaceskip=\fontdimen2\font plus
\BIBentryALTinterwordstretchfactor\fontdimen3\font minus
  \fontdimen4\font\relax}
\providecommand{\BIBforeignlanguage}[2]{{%
\expandafter\ifx\csname l@#1\endcsname\relax
\typeout{** WARNING: IEEEtran.bst: No hyphenation pattern has been}%
\typeout{** loaded for the language `#1'. Using the pattern for}%
\typeout{** the default language instead.}%
\else
\language=\csname l@#1\endcsname
\fi
#2}}
\providecommand{\BIBdecl}{\relax}
\BIBdecl

\bibitem{Mishra2019Channel}
D.~Mishra and H.~Johansson, ``{Channel estimation and low-complexity
  beamforming design for passive intelligent surface assisted {MISO} wireless
  energy transfer},'' \emph{Proc. {IEEE} ICASSP}, pp. 4659--4663, May 2019.

\bibitem{Nadeem2020Intelligent}
Q.-U.-A. Nadeem, H.~Alwazani, A.~Kammoun, A.~Chaaban, M.~Debbah \emph{et~al.},
  ``Intelligent reflecting surface-assisted multi-user {MISO} communication:
  Channel estimation and beamforming design,'' \emph{IEEE Open J. Commun.
  Soc.}, vol.~1, pp. 661--680, May 2020.

\bibitem{Bjornson2020Intelligent}
E.~{Björnson}, {\"{O}}.~{Özdogan}, and E.~G. {Larsson}, ``Intelligent
  reflecting surface versus decode-and-forward: How large surfaces are needed
  to beat relaying?'' \emph{{IEEE} Wireless Commun. Lett.}, vol.~9, no.~2, pp.
  244--248, Feb. 2020.

\bibitem{Jensen2020Optimal}
T.~L. Jensen and E.~{De Carvalho}, ``An optimal channel estimation scheme for
  intelligent reflecting surfaces based on a minimum variance unbiased
  estimator,'' \emph{Proc. {IEEE} ICASSP}, pp. 5000--5004, May 2020.

\bibitem{Wan2020Broadband}
Z.~{Wan}, Z.~{Gao}, and M.~{Alouini}, ``Broadband channel estimation for
  intelligent reflecting surface aided {mmWave} massive {MIMO} systems,''
  \emph{Proc. {IEEE} ICC}, pp. 1--6, June 2020.

\bibitem{Zheng2020Intelligent}
B.~Zheng, C.~You, and R.~Zhang, ``Intelligent reflecting surface assisted
  multi-user {OFDMA}: Channel estimation and training design,'' \emph{{IEEE}
  Trans. Wireless Commun.}, vol.~19, no.~12, pp. 8315--8329, Mar. 2020.

\bibitem{Singh2020Optimal}
I.~Singh, P.~J. Smith, and P.~A. Dmochowski, ``Optimal {SNR} analysis for
  single-user {RIS} systems,'' \emph{Proc. {IEEE} PIMRC}, pp. 1--6, Sep. 2021.

\bibitem{Hu2019TwoTimescale}
C.~Hu, L.~Dai, S.~Han, and X.~Wang, ``Two-timescale channel estimation for
  reconfigurable intelligent surface aided wireless communications,''
  \emph{{IEEE} Trans. Commun.}, pp. 1--1, 2021.

\bibitem{Liu2019MatrixCalibration}
H.~Liu, X.~Yuan, and Y.-J.~A. Zhang, ``Matrix-calibration-based cascaded
  channel estimation for reconfigurable intelligent surface assisted multiuser
  {MIMO},'' \emph{{IEEE} J. Sel. Areas Commun.}, vol.~38, no.~11, pp.
  2621--2636, Dec. 2019.

\bibitem{Nadeem2020asymptotic}
Q.~U.~A. Nadeem, A.~Kammoun, A.~Chaaban, M.~Debbah, and M.~S. Alouini,
  ``{Asymptotic max-min SINR analysis of reconfigurable intelligent surface
  assisted MISO systems},'' \emph{{IEEE} Trans. Wireless Commun.}, vol.~19,
  no.~12, pp. 7748--7764, Dec. 2020.

\bibitem{You2020Intelligent}
C.~You, B.~Zheng, and R.~Zhang, ``Intelligent reflecting surface with discrete
  phase shifts: Channel estimation and passive beamforming,'' \emph{Proc.
  {IEEE} ICC}, pp. 1--6, Jun. 2020.

\bibitem{miller2021efficient}
C.~Miller, P.~A. Dmochowski, and P.~J. Smith, ``Efficient channel estimation
  for {RIS},'' in \emph{Proc. {IEEE} ICC}, Jun. 2021, pp. 1--6.

\bibitem{wang2019channel}
A.~{Wang}, R.~{Yin}, and C.~{Zhong}, ``Channel estimation for uniform
  rectangular array based massive {MIMO} systems with low complexity,''
  \emph{{IEEE} Trans. Veh. Technol.}, vol.~68, no.~3, pp. 2545--2556, Mar.
  2019.

\bibitem{sangodoyin_cluster_2018}
S.~Sangodoyin, V.~Kristem, C.~U. Bas, M.~Käske, J.~Lee \emph{et~al.},
  ``Cluster characterization of 3-{D} {MIMO} propagation channel in an urban
  macrocellular environment,'' \emph{{IEEE} Trans. Wireless Commun.}, vol.~17,
  no.~8, pp. 5076--5091, Aug. 2018.

\bibitem{3GPP}
3GPP, ``{Study on channel model for frequencies from 0.5 to 100 {GH}z},'' 3rd
  Generation Partnership Project (3GPP), Tech. Rep. {TR} 38.901 (V15.1.0), Aug.
  2018.

\bibitem{antenna_config_paper}
C.~L. {Miller}, P.~J. {Smith}, and P.~A. {Dmochowski}, ``Space-constrained
  arrays for massive {MIMO},'' \emph{{IEEE} Wireless Commun. Lett.}, pp. 1--1,
  2021.

\bibitem{horn_johnson}
R.~A. Horn and C.~R. Johnson, \emph{Matrix {A}nalysis}.\hskip 1em plus 0.5em
  minus 0.4em\relax Cambridge University Press, 2012.

\end{thebibliography}
	%
\end{document}